\documentclass[sigconf]{acmart}
%% \BibTeX command to typeset BibTeX logo in the docs
\AtBeginDocument{%
  \providecommand\BibTeX{{%
    \normalfont B\kern-0.5em{\scshape i\kern-0.25em b}\kern-0.8em\TeX}}}

%%%%%%%%%%%%%%%%%%%%%%%%%%%%%%%%%%%%%%%%%%%%%%%%
%% for arXiv version only, the following lines to be removed for
%% version submitted to ACM
\settopmatter{printacmref=false}
\setcopyright{none}
\renewcommand\footnotetextcopyrightpermission[1]{}
\pagestyle{plain}
%%%%%%%%%%%%%%%%%%%%%%%%%%%%%%%%%%%%%%%%%%%%%%%%

\usepackage[list=true]{subcaption}
\usepackage{tabularx}
\usepackage[T1]{fontenc}
\usepackage{amsmath}
\usepackage{cleveref}
\usepackage{amsthm}
\usepackage{amsfonts}
\usepackage{amssymb}
\usepackage{dsfont}
\usepackage{multirow}
\usepackage{diagbox}
\usepackage{graphicx}
\usepackage{ifthen}
\graphicspath{{/.images/}}

\def\LLR{\mathsf{LLR}}

\def\etr{\mathsf{etr}}

\newcounter{casenum}
\newenvironment{caseof}{\setcounter{casenum}{1}}{\vskip.5\baselineskip}
\newcommand{\case}[2]{\vskip.5\baselineskip\par\noindent {\bfseries Case \arabic{casenum}:} #1\\#2\addtocounter{casenum}{1}}

%JK Definitions
\newcommand{\ra}{\rightarrow}

\newcommand{\imp}{\Rightarrow}

\newcommand{\indicator}[1]{ \boldsymbol{1}_{\{#1\}}}

\DeclareMathOperator*{\argmax}{arg\,max}

\newcommand{\R}{\mathbb{R}}

\newcommand{\ignore}[1]{}

\newcommand{\X}{\mathcal{X}}
\newcommand{\Exp}[1]{\mathbb{E}\left[#1\right]}

\newboolean{showcomments} \setboolean{showcomments}{false}
\newcommand{\jk}[1]{ \ifthenelse{\boolean{showcomments}}
  {\textcolor{red}{(JK says: #1)}} {} } \newcommand{\vd}[1]{
  \ifthenelse{\boolean{showcomments}} {\textcolor{red}{(VD says: #1)}}
  {} }

\begin{document}

\title{Pareto-optimal energy sharing between battery-equipped
  renewable generators}

\author{Vivek Deulkar}
%\authornote{Both authors contributed equally to this research.}
\email{{vivekdeulkar, jayakishnan.nair}@iitb.ac.in}
\orcid{1234-5678-9012}
\author{Jayakrishnan Nair}
%\authornotemark[1]
%\email{jayakrishnan.nair@ee.iitb.ac.in}
\affiliation{%
 \institution{IIT Bombay}
 % \streetaddress{P.O. Box 1212}
 % \city{Dublin}
 % \state{Ohio}
 % \postcode{43017-6221}
}
\renewcommand{\shortauthors}{Vivek Deulkar \& Jayakrishnan Nair}

\begin{abstract}
  The inherent intermittency of renewable sources like wind and solar
  has resulted in a bundling of renewable generators with storage
  resources (batteries) for increased reliability. In this paper, we
  consider the problem of energy sharing between two such bundles,
  each associated with their own demand profiles. The demand profiles
  might, for example, correspond to commitments made by the bundle to
  the grid. With each bundle seeking to minimize its loss of load
  rate, we explore the possibility that one bundle can supply energy
  to the other from its battery at times of deficit, in return for a
  reciprocal supply from the other when it faces a deficit itself. We
  show that there always exist \emph{mutually beneficial} energy
  sharing arrangements between the two bundles. Moreover, we show that
  Pareto-optimal arrangements involve at least one bundle transferring
  energy to the other at the maximum feasible rate at times of
  deficit. We illustrate the potential gains from such dynamic energy
  sharing via an extensive case study.
\end{abstract}

% \acmYear{2020}\copyrightyear{2020}
% \setcopyright{acmcopyright}
% \acmConference[e-Energy '20]{The Eleventh ACM International Conference on Future Energy Systems}{June 22--26, 2020}{Virtual Event, Australia}
% \acmBooktitle{The Eleventh ACM International Conference on Future Energy Systems (e-Energy '20), June 22--26, 2020, Virtual Event, Australia}
% \acmPrice{15.00}
% \acmDOI{10.1145/3396851.3397740}
% \acmISBN{978-1-4503-8009-6/20/06}

\begin{CCSXML}
<ccs2012>
<concept>
<concept_id>10010405</concept_id>
<concept_desc>Applied computing</concept_desc>
<concept_significance>500</concept_significance>
</concept>
<concept>
<concept_id>10010405.10010455.10010460</concept_id>
<concept_desc>Applied computing~Economics</concept_desc>
<concept_significance>500</concept_significance>
</concept>
<concept>
<concept_id>10010405.10010481.10010484</concept_id>
<concept_desc>Applied computing~Decision analysis</concept_desc>
<concept_significance>500</concept_significance>
</concept>
</ccs2012>
\end{CCSXML}

\ccsdesc[500]{Applied computing}
\ccsdesc[500]{Applied computing~Economics}
\ccsdesc[500]{Applied computing~Decision analysis}

%% Keywords. The author(s) should pick words that accurately describe
%% the work being presented. Separate the keywords with commas.
\keywords{Renewable generation, energy storage, dynamic energy
  sharing, Pareto-optimality, bargaining solutions}

\maketitle

\section{Introduction}
\label{sec:intro}

Environmental concerns are driving a worldwide push towards the
adoption of renewable sources for electricity generation. However, the
primary challenge in increasing the penetration of renewable sources
in the electricity grid is the intermittency and unpredictability of
their generation. As a result, renewable sources are increasingly
being bundled with storage devices (batteries) to smoothe out the
temporal intermittency in generation, enabling the generator to supply
power to the grid according to a contracted supply profile with high
reliability. In this paper, we explore \emph{dynamic energy sharing}
between two such bundles (also referred to as agents), each consisting
of a renewable generator and a battery, with the goal of enhancing the
reliability of both bundles.

The idea behind dynamic energy sharing is that one can exploit
statistical diversity between the net generation (supply minus demand)
processes of the bundles, so that one agent can supply energy to the
other (from its battery) in times of deficit. For example, a solar
generator and a wind generator could opportunistically supply energy
to one another, in order to meet their respective commitments to the
grid. However, such a sharing arrangement would only work if it
benefits both parties involved, i.e., the sharing arrangement must be
\emph{mutually beneficial}. Interestingly, naive complete pooling,
wherein the two batteries are treated as one common resource, may not
have this property. This motivates us to consider \emph{partial
  sharing mechanisms}, wherein each agent supplies energy to the other
to meet its deficit, but only up to a pre-specified drain rate (i.e.,
an upper bound on the rate of energy supply). One of our main
contribution is to show that there always exist mutually beneficial
partial sharing mechanisms of this kind.

Having established the existence of mutually beneficial sharing
configurations, the next natural step is to capture the \emph{Pareto
  frontier} of efficient sharing configurations. Remarkably, we are
able to provide a precise characterization of this Pareto frontier;
all Pareto-optimal configurations involve at least one agent allowing
the other to draw energy at the maximum possible rate in times of
deficit. Given this characterization of the Pareto frontier, one can
capture the sharing arrangements that would emerge between the agents
by appealing to the theory of bargaining \cite{Myerson1991}.

Structurally, our work is related to the vast literature on resource
pooling in service systems and networks. This body of work explores
the sharing of a resource, such as service capacity, bandwidth, cache
memory, etc., to exploit statistical economies of
scale~\cite{singh2011cooperative,
  sarkar2008coalitional,grokop2008spectrum,huang2006auction,
  dahlin1994cooperative, cardellini1999dynamic, archer2006load}.
% To the best of our knowledge, we are the first to look at energy
% sharing (IS THIS TRUE??)
However, the primary style of pooling considered in this literature is
\emph{complete pooling}, wherein the entities `merge' by pooling their
resources completely, with the payoff of the coalition being split
among its members using ideas from cooperative game theory. In
contrast, our interest here is in the setting of
\emph{non-transferable utility}, wherein no side payments occur
between the agents, implying that an agent would agree to a sharing
arrangement only if doing so increases its own (private,
non-transferable) utility. The only other work we are aware of that
takes this approach is~\cite{Nandigam19}, which focuses on partial
server sharing between two Erlang-B loss systems. However, the system
modeling, sharing mechanism, and its analysis differ considerably
between~\cite{Nandigam19} and the present paper.

At its core, our model involves inventory pooling between two agents,
each having its own supply and demand process. These agents might
correspond to renewable generators connected to the grid with their
own contracted supply commitments, prosumers within a smart microgrid
(as in \cite{zhu2013sharing}), or even different remote microgrids
that can help each another increase their reliability\footnote{\tiny
  \url{https://www.cleantech.com/power-to-the-people-remote-microgrids-across-southeast-asia/}}. Our
main departure from the prior literature on these topics (described
below) is the cooperative game theoretic framework of
non-transferable utility. This is applicable in situations where
agents cannot balance a certain loss of reliability with a monetary
reward. Instead, we appeal to the theory of bargaining to balance the
benefit of both agents from the sharing arrangement. For example, in
the case of renewable generators participating in electricity markets,
maintaining high reliability might be a pre-requisite for market
participation. Similarly, remote microgrids might prefer not to set up
a system of monetary payments, but to simply maintain equity in the
benefits from energy sharing.

The remainder of this paper is organised as follows. After a quick
review of the related literature below, we describe our system model
and our energy sharing mechanism in Section~\ref{sec:model}. We
establish key monotonicity properties of our sharing model in
Section~\ref{sec:monotonocity}. Pareto-optimal sharing configurations
are addressed in Section~\ref{sec:pareto}. We present a case study in
Section~\ref{sec:case_study}, and conclude in
Section~\ref{sec:discuss}.

\subsection*{Related literature}

There are two broadly two strands of work on dynamic energy sharing
between renewable generators (or prosumers). The first treats energy
trading as an economic transaction, using prices to design market
mechanisms for efficient energy transfer; papers that take this
approach include
\cite{liu2017energy,dimeas2005operation,Cintuglu15,eddy2014multi,
  luo2014autonomous,carrion2009bilevel,lee2014direct}.

The second strand of work treats energy sharing from the standpoint of
a central optimizer, who is interested in maximizing social
welfare. For example, \cite{lakshminarayana2014cooperation} analyzes
the tradeoff between sharing and the use of storage within smart
microgrid. Another work in this space is \cite{zhu2013sharing}, which
focuses on scheduling algorithms for energy transfer within a
microgrid for loss minimization.

In contrast to the first stream of work, the present paper considers
the non-transferable utility setting, i.e., one where there are no
side-payments between the agents. In contrast to the second strand, we
still consider the agents as strategic, in that they participate in
energy sharing mechanisms with the sole objective to enhancing their
own reliability.

\section{Model and Preliminaries}
\label{sec:model}

%In this section, we describe our system model and state some
%preliminaries.
We begin by describing the `standalone' setting, i.e., in the absence
of a sharing arrangement between the agents. We use the following
notation throughout the paper: For $x \in \R,$
\begin{align*}
  [x]_{+}&= (x)_{+} =\max(0,x),\\
  [x]_{-}&= (x)_{-} =-\min(0,x).
\end{align*}
Also, when referring to any Agent~$i,$ we refer to the \emph{other}
agent as Agent~$-i.$

\subsection{Standalone setting}

We consider two agents, Agent~1 and Agent~2, each associated with a
stochastic supply and demand process, and equipped with a battery.
The battery of Agent~$i$ has capacity $B_i.$ The energy content in the
battery is modulated by each agent's net generation (i.e., supply
minus demand) process. Specifically, we denote the net (power)
generation process of Agent~$i$ by $r_i(t) = g_i(t) - d_i(t),$ where
$g_i(t)$ and $d_i(t)$ denote, respectively, the supply and demand
process of Agent~$i.$ For example, $g_i(t)$ might represent the power
generation from a renewable generator, and $d_i(t)$ its contracted
commitment to the grid. Thus, the energy content in the battery of
Agent~$i,$ denoted by $b_i(t),$ evolves as follows:
\begin{equation}
\frac{d}{dt}b_i(t) = 
    \begin{cases}
      
      0           & \text{if }b_i(t) = 0 \text{ and } r_i(t)<0,\\ 
      0           & \text{if }b_i(t) = B_i \text{ and } r_i(t)>0,\\
      r_i(t)    & \text{ otherwise. }      
    \end{cases} \label{eq:bdot}
\end{equation}
The above dynamics capture the boundary conditions that a fully
charged battery cannot be charged further, and an empty battery cannot
be discharged further.\footnote{In practice, for certain battery
  chemistries, it is not advisable to drain/charge the battery
  completely. So `empty' and `full' in our model could in practice
  correspond to the floor and ceiling of charge level the agent
  chooses to operate the battery with.} Except for these boundary
cases, the battery is charged/discharged at the net generation rate
$r_i(t).$ Further, we assume that the net generation rates $r_1(t)$
and $r_2(t)$ are dependent on the state of a common background Markov
process, which captures the supply and demand uncertainty of each
agent.

Formally, let $\{X(t)\}$ denote the background Markov process. We
assume that that $\{X(t)\}$ is an irreducible continuous-time Markov
chain (CTMC) over a finite state space $S.$ The net generation rate of
Agent~$i$ is a function of the state of this background process, i.e.,
$r_i(t) = r_i(X(t)).$

To ensure regeneration of the buffer occupancy processes, we assume
there exist in $S,$ states $s_1,\ s_2,\ s_3,$ and $s_4$ satisfying:
\begin{itemize}
\item $r_1(s_1),r_2(s_1) > 0,$
\item $r_1(s_2),r_2(s_2) < 0,$
\item $r_1(s_3) > 0,\ r_2(s_3) < 0,$ and
  $r_2(s_3) \leq r_2(s)\ \forall s \in S,$
\item $r_1(s_4) <0,\ r_2(s_4) > 0,$ and
  $r_1(s_4) \leq r_1(s)\ \forall s \in S.$
\end{itemize}
State~$s_1$ (respectively, $s_2$) results in positive (respectively,
negative) net generation for both agents. State~$s_3$ results in a
positive net generation for Agent~1 and a negative net generation for
Agent~2, such that this is also the most negative net generation for
Agent~2. State~$s_4$ is similar, except the roles of the agents are
reversed.

We note here that this model is quite general; the state of the
background process could capture all factors that influence the supply
and demand of each agent, including past and present weather
conditions, as well as time of day. Moreover, the model allows for the
net generation processes of both agents to be correlated in an
arbitrary fashion. Finally, ramp constraints on battery
charging/discharging can also be incorporated into the model; these
would simply limit the values that the (battery modulating) net
generation process can take.

The performance of each agent is measured via its \emph{loss of load
  rate} ($\LLR$), which is the long run average rate of lost load
(unmet demand).
% \textbf{Shall we mention that there could be other metrics for
% performance evaluation viz LOLP and mention that LLR is better than
% LOLP from sharing point of view?}  JK: Too early to say this,
% IMO. Let's plan to do this in the conclusion section.
Based on the battery
dynamics~\eqref{eq:bdot}, note that Agent~$i$ is unable to cater to
its demand when $b_i(t) = 0$ and $r_i(t) < 0,$ i.e., the battery is
empty and the instantaneous generation is insufficient to meet the
instantaneous demand.  Thus, the standalone loss of load rate of
Agent~$i$ is defined as
\begin{equation}
  \label{eq:llr_standalone}
  \LLR_i^{sa} = \lim_{t \ra \infty} \frac{1}{t} \int_0^t \indicator{b_i(t) = 0} [r_i(t)]_{-} dt \quad \text{(almost surely)}.
\end{equation}
Here, $\indicator{A}$ equals 1 if $A$ is true and zero otherwise.
Each agent seeks to minimize its loss of load rate. Indeed, it would
be natural for the grid to penalize a renewable generator in terms of
the $\LLR$ relative to its contracted supply curve
\cite{EWEA_energy_imbalance_penalty, brunetto2011wind}. Note that the
existence of the almost sure limit in \eqref{eq:llr_standalone}
follows from the positive recurrence of the background Markov process.

For any agent, the evolution of its battery level can be modeled as a
\emph{Markov modulated fluid queue}, a well studied object in the
queueing literature \cite{anick1982stochastic, Mitra88}. In
particular, it is easy to see that corresponding to Agent~$i,$
$(X(t),b_i(t))$ is a Markov process over state space
$S \times [0,B_i],$ whose invariant distribution can be computed by
solving a system of ordinary differential equations (see
\cite{anick1982stochastic, Mitra88}). This invariant distribution can
in turn be used to compute $\LLR_i$~\cite{deulkar2019sizing}. However,
in the present paper, we are interested not in the `standalone'
behavior of each individual agent, but in \emph{dynamic energy
  sharing} arrangements between the two. (This naturally couples the
evolution of the two batteries, necessitating a joint analysis of both
battery occupancies.) Our proposed mechanism for dynamic energy
sharing is described next.

\subsection{Sharing Mechanisms}
\label{subsec:sharing_strategy}

%Based on the battery dynamics~\eqref{eq:bdot}, note that Agent~$i$ is
%unable to cater to its demand when $b_i(t) = 0$ and $r_i(t) < 0,$
%i.e., the battery is empty and the instantaneous generation is
%insufficient to meet the instantaneous demand.
For either agent, loss of load is undesirable (and might even result
in penalty from the grid \cite{EWEA_energy_imbalance_penalty,
  brunetto2011wind}). The motivation for dynamic energy sharing is
that when Agent~$i$ faces loss of load, Agent~$-i$ can supply energy
to Agent~$i$ from its battery to (either completely or partially)
satisfy the unmet demand.
%Of course, Agent~$-i$ would in return expect a supply of
%energy from Agent~$i$ when it is unable to meet its demand.
However, if no constraints are placed on such energy transfer, i.e.,
if each agent can draw energy from the other's battery without
restriction, the resulting configuration might not be beneficial to
both parties. Indeed, if Agent~$i$ is considerably more likely to run
a deficit than Agent~$-i,$ it is natural to expect that unconstrained
energy sharing would actually be detrimental to
Agent~$-i.$\footnote{That unconstrained energy sharing is not
  guaranteed to be mutually beneficial will be demonstrated later.}
This motivates us to explore \emph{constrained} or \emph{partial}
energy sharing arrangements.

\begin{table*}[htp]
 \begin{center}
 	%\footnotesize
 	\begin{tabular}{ ||c| c| c|c|| }
 	\hline \hline
 	\diagbox{$b_1$} {$b_2$} &\text{Empty} & \text{Int.} & \text{Full}\\
 	\hline \hline
 	\multirow{2}{*}{Empty}	
 	&$\frac{d}{dt}b_1=[r_1-\min(c_1,(r_2)_-)]_+$  &$\frac{d}{dt}b_1=\max(0,r_1)$  &$\frac{d}{dt}b_1=\mathds{1}_{\{r_2 \geq \min(c_2,(r_1)_-)\}} [r_1 + \min(c,(r_2)_+)]_+$\\
 	&$\frac{d}{dt}b_2=[r_2-\min(c_2,(r_1)_-)]_+$  &$\frac{d}{dt}b_2=r_2-\min(c_2,(r_1)_-)$ &$\frac{d}{dt}b_2= \mathds{1}_{\{r_2 < \min(c_2,(r_1)_-)\}} [r_2-\min(c_2,(r_1)_-)]$\\
 	\hline
 	\multirow{2}{*}{Int.}
 	&$\frac{d}{dt}b_1=r_1-\min(c_1,(r_2)_-)$ &$\frac{d}{dt}b_1=r_1$ &$\frac{d}{dt}b_1=r_1+\min(c,(r_2)_+)$\\
 	&$\frac{d}{dt}b_2=\max(0,r_2)$ &$\frac{d}{dt}b_2=r_2$ &$\frac{d}{dt}b_2=\min(0,r_2)$\\
 	\hline
 	\multirow{2}{*}{Full}
 	&$\frac{d}{dt}b_1=\mathds{1}_{\{r_1<\min(c_1,(r_2)_-)\}} [r_1-\min(c_1,(r_2)_-)]$ &$\frac{d}{dt}b_1=\min(0,r_1)$ &$\frac{d}{dt}b_1=\min(0,r_1)$\\
 	&$\frac{d}{dt}b_2=\mathds{1}_{\{r_1 \geq \min(c_1,(r_2)_-)\}} [r_2 + \min(c,(r_1)_+)]_+$ &$\frac{d}{dt}b_2=r_2+\min(c,(r_1)_+)$ &$\frac{d}{dt}b_2=\min(0,r_2)$\\
 	\hline \hline
 	
 	\end{tabular}
\end{center}
\caption{Battery dynamics of two agents; the battery level
  $\boldsymbol{b_1}$ of Agent~1 is along the rows while the battery
  level $\boldsymbol{b_2}$ of Agent~2 is along the columns. Here Int. stands for intermediate.}
\label{table:bat_dynamics}
\end{table*}

Our energy sharing arrangement is characterized by the tuple
$(c_1,c_2).$ Informally, $c_i$ is the maximum energy drain rate
allowed by Agent~$i$ when Agent~$-i$ faces loss of load. Additionally,
we assume that there is a capacity constraint $c$ on the rate of
energy transfer from either agent to the other. This captures any
physical transmission constraints that limit the rate of energy
transfer between the agents.\footnote{For simplicity, we take the
  capacity constraint to be symmetric; our results extend naturally to
  the case where there are different upper bounds on the energy
  transfer rate in the two directions. Moreover, in the absence of
  such constraints, $c$ may be set to be a suitably large value
  % \textbf{larger than
  % $\max_{x \in S} \left\{ -\min(r_1(x)),-\min(r_2(x)) \right\}$
  % ????}
  (say $\max_{x \in S} \left\{ \max(|r_1(x)|, |r_2(x)|) \right\}$)
  such that this constraint is never binding.} Thus,
$c_1,\ c_2 \leq c.$

Formally, if Agent~$-i$ runs a deficit at time~$t,$ i.e.,
$b_{-i}(t) = 0$ and $r_{-i}(t) < 0,$ then:
\begin{enumerate}
\item If $b_i(t) \in (0,B_i),$ then Agent~$i$ transfers energy to
  Agent~$-i$ at rate $$\etr_i(t) = \min(c_i,-r_{-i}(t)).$$ Here,
  $\etr$ stands for \emph{energy transfer rate}. In other words,
  Agent~$i$ helps cut the deficit rate of Agent~$-i$ subject to a
  maximum transfer rate of $c_i.$
\item If $b_i(t) = 0$ and $r_i(t) > 0,$ then Agent~$i$ transfers
  energy to Agent~$-i$ at rate
  $$\etr_i(t) = \min(r_i(t),c_i,-r_{-i}(t)).$$ In this case, since
  Agent~$i$ cannot draw energy from its battery, its ability to
  transfer energy to Agent~$-i$ is further limited by its own net
  generation rate $r_i(t)$. If $b_i(t) = 0$ and $r_i(t) \leq 0,$ then
  Agent~$i$ itself faces loss of load, and therefore cannot transfer
  energy to Agent~$-i,$ i.e., $\etr_i(t) = 0.$
\item if $b_i(t) = B_i,$ then Agent~$i$ transfers energy to Agent~$-i$
  at rate
  \begin{itemize}
  \item $\etr_i(t) = \min(c_i,-r_{-i}(t))$ if $r_i \leq \min(c_i,-r_{-i}(t)),$ and
  \item $\etr_i(t) = \min(c,r_i)$ if $r_i > \min(c_i,-r_{-i}(t)).$
  \end{itemize}
  In the former case, the nominal transfer rate of
  $\min(c_i,-r_{-i}(t))$ exceeds the net generation rate of Agent~$i,$
  and so Agent~$i$ begins to discharge its battery to help Agent~$-i.$
  In the latter case, the net generation rate of Agent~$i$ exceeds the
  nominal transfer rate, and so Agent~$i$ does not discharge its
  battery, but simply transfers its surplus generation to Agent~$-i$
  subject to the capacity constraint $c.$
\end{enumerate}
Additionally, when Agent~$i$ has a fully charged battery with a
positive net generation rate, we assume that it transfers its overflow
rate to Agent~$-i$ (whether or not Agent~$-i$ faces loss of load),
again subject to the capacity constraint $c.$ Of course, if both
batteries are fully charged, then the overflow is lost. The above
sharing mechanism is summarized in Table~\ref{table:bat_dynamics}.

It is important to note that under the proposed mechanism, unless
Agent~$i$'s battery is full, it only transfers energy to Agent~$-i$
when the latter faces loss of load. Moreover, the energy transfer in
this case only covers (part of) the deficit rate, it does not actually
charge the battery of Agent~$-i.$

Next, we define the loss of load rate for each agent under the
proposed sharing mechanism.

\subsection{$\LLR$ characterization}

Under the sharing configuration $(c_1,c_2),$ the loss of load rate of
each agent is characterized as follows. When Agent~$i$ faces loss of
load, i.e., $b_{i}(t) = 0$ and $r_{i}(t) < 0,$ then the instantaneous
rate at which it loses load equals $[-r_{i}(t) - \etr_{-i}(t)]_+.$
Thus, its loss of load rate is given by
$$\LLR_{i}(c_1,c_2) = \lim_{t \ra \infty} \frac{1}{t} \int_0^t
\indicator{b_{i}(t) = 0,\ r_{i}(t) < 0}[-r_{i}(t) - \etr_{-i}(t)]_+ \
dt.$$ As before, the above limit is in an almost sure sense, and its
existence is guaranteed by the positive recurrence of the background
process.

It is important to note that the sharing configuration $(0,0)$ is not
equivalent to the standalone setting, since even under the $(0,0)$
configuration, the agents supply (overflow) energy to one another when
their batteries are full. In fact, it can be shown that
$\LLR_i(0,0) < \LLR_i^{sa}$ for $i = 1,2.$

Given the complicated form of the sharing mechanism, one would not
expect a closed form characterization for $\LLR_{i}(c_1,c_2);$ indeed
a closed form for the loss of load rate does not exist even for the
`standalone' setting.
%So our goal here is to shine light on the
%usefulness of our dynamic sharing mechanism, and the structure of
%configurations that might emerge between agents in practice.
However, we are still remarkably able to analytically establish the
following results.
\begin{itemize}
\item There exist mutually beneficial sharing
  configurations. Specifically, there exists a configuration
  $(c_1,c_2) \in [0,c]^2$ such that
  $\LLR_i(c_1,c_2) < \LLR_i(0,0) < \LLR_i^{sa}$ for $i = 1,2.$
\item The Pareto frontier of efficient, mutually beneficial, sharing
  configurations is non-empty. All configurations on the Pareto
  frontier involve at least one agent sharing energy with the other at
  the maximum possible rate, i.e., $c_i = c$ for at least one
  $i \in \{1,2\}.$
\end{itemize}
The above results are proved using monotonicity properties of the loss
of load rates with respect to $c_1$ and $c_2.$ These monotonicity
properties, which are illuminating in their own right, are the focus
of the following section.

\section{Monotonicity Properties}
\label{sec:monotonocity}

Without loss of generality, we assume that
$$c_i \leq c_{i,\max}
:= \min(c,\ \max_{s \in S} \{[r_{-i}(s)]_{-}\}$$ for $i \in \{1,2\}).$
Given the sharing mechanism described in
Section~\ref{subsec:sharing_strategy}, it is easy to see that
increasing $c_i$ beyond $\max_{s \in S} \{[r_{-i}(s)]_{-}\}$
% \textbf{(The argument of $r$ is time $t$)} JK: No, when we say
% r_i(X(t)), the argument is the state of the process.
does not influence the realised energy sharing, since agents only help
fulfill the other's deficit rates (except under overflow, which is not
constrained by $(c_1,c_2)$).

The main result of this section is the following.
\begin{theorem}
  \label{thm:monotonocity}
  Under the proposed sharing mechanism,
  \begin{itemize}
  \item $\LLR_i(c_1,c_2)$ is strictly increasing in $c_i$ over $[0,c_{i,\max}],$
  \item $\LLR_i(c_1,c_2)$ is strictly decreasing in $c_{-i}$ over
    $[0,c_{-i,\max}],$
  \item $\LLR_1(c_1,c_2) + \LLR_2(c_1,c_2)$ is strictly decreasing in
    $c_i$ over $[0,c_{i,\max}].$
\end{itemize}
\end{theorem}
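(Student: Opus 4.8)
The plan is to prove all three bullets at once through a sample-path coupling of two copies of the system driven by the \emph{same} realization of the background chain $\{X(t)\}$ but operated under configurations $(c_1,c_2)$ and $(c_1',c_2)$ with $c_1 < c_1'$. By exchanging the agents' labels it suffices to analyze an increase in $c_1$: this will yield monotonicity of $\LLR_1$ in its own rate, monotonicity of the sum in $c_1$, and (by subtraction) monotonicity of $\LLR_2$ in $c_1$; the remaining statements follow by symmetry. Write $b_j,b_j'$ for the battery trajectories under the two configurations, started from a common state. The central step is the pathwise ordering $b_1'(t)\le b_1(t)$ and $b_2'(t)\le b_2(t)$ for all $t$. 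The motivation is that in Table~\ref{table:bat_dynamics} the parameter $c_1$ enters only through $\min(c_1,(r_2)_-)$: it appears in $\dot b_1$ with a minus sign, and in $\dot b_2$ only in the Full/Empty cell with $r_2<0$, through an indicator whose threshold increases in $c_1$. Hence raising $c_1$ weakly lowers both drifts, and I would establish the ordering by a first-crossing argument: let $\tau$ be the first time one of the inequalities is tight and about to be violated, and verify from Table~\ref{table:bat_dynamics}, using $\min(c_1',(r_2)_-)\ge\min(c_1,(r_2)_-)$, that the drift of the lower trajectory does not exceed that of the upper one at $\tau$, contradicting a crossing.

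Given this ordering, monotonicity of $\LLR_1$ in $c_1$ follows \emph{pointwise}. Agent~1 loses load only when $b_1=0$ and $r_1<0$, so $b_1'\le b_1$ implies the primed copy is empty whenever the unprimed one is. Moreover the rate $\etr_2$ at which Agent~2 relieves Agent~1's deficit is weakly nondecreasing in Agent~2's battery level: it equals $\min(c_2,-r_1)$ on $(0,B_2)$, is no larger at $b_2=0$, and no smaller at $b_2=B_2$ (where Agent~2 may also contribute overflow). Thus $b_2'\le b_2$ gives $\etr_2'\le\etr_2$, and at every instant the loss integrand $\indicator{b_1=0,r_1<0}[-r_1-\etr_2]_+$ is at least as large in the primed copy, whence $\LLR_1(c_1',c_2)\ge\LLR_1(c_1,c_2)$.

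For the sum I would first record a system-wide energy balance: over $[0,T]$, total generation equals total demand met, plus the net change in stored energy, plus the energy wasted to overflow. Letting $T\to\infty$ and using boundedness of the batteries together with positive recurrence yields $\LLR_1(c_1,c_2)+\LLR_2(c_1,c_2)=-\Exp{r_1+r_2}+\overline{\mathrm{OF}}(c_1,c_2)$, where $\overline{\mathrm{OF}}$ is the long-run overflow rate and $-\Exp{r_1+r_2}$ is independent of $(c_1,c_2)$. It then remains to show $\overline{\mathrm{OF}}$ is nonincreasing in $c_1$, and here the coupling pays off a second time: since $b_j'\le b_j\le B_j$, we have $\{t:b_j'(t)=B_j\}\subseteq\{t:b_j(t)=B_j\}$ for each $j$, so in the primed copy both batteries are full on a weakly smaller set of instants. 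As overflow can occur only when some battery is full, a short pointwise check (identical $r$'s and nested full-sets, with the transmission cap $c$ handled case by case) gives $\mathrm{OF}'(t)\le\mathrm{OF}(t)$ for all $t$, hence $\overline{\mathrm{OF}}(c_1',c_2)\le\overline{\mathrm{OF}}(c_1,c_2)$ and the sum is weakly decreasing in $c_1$.

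The cross-rate statement now comes for free: writing $\LLR_2=(\LLR_1+\LLR_2)-\LLR_1$, the sum is weakly decreasing while $\LLR_1$ is weakly increasing in $c_1$, so $\LLR_2$ is weakly decreasing in $c_1$, i.e.\ $\LLR_i$ decreases in $c_{-i}$. Strictness in all three claims I would obtain from the regeneration states $s_1,\dots,s_4$: over a regeneration cycle there is positive probability of a sojourn in $s_3$ (where $r_2$ is most negative, so $\min(c_1,(r_2)_-)$ is strictly increasing in $c_1$ on $[0,c_{1,\max}]$) with Agent~1's battery intermediate, forcing a strictly larger drain of $b_1$; propagating this—through a later sojourn in $s_4$ (Agent~1's deepest deficit) and in $s_1$ (both batteries charging)—produces strictly more loss for Agent~1 and strictly less overflow on a positive-probability set of paths, giving strict inequalities after taking expectations. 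The main obstacle is the pathwise ordering of the first paragraph: because the two coupled copies may occupy \emph{different} cells of Table~\ref{table:bat_dynamics} at the crossing time, the drift comparison must be checked across all admissible pairs of regimes, with particular care at the Full/Empty boundary cells where the indicator terms and the cap $c$ interact; everything downstream is either pointwise or a one-line consequence of the energy-balance identity.
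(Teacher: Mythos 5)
Your proposal is correct and takes essentially the same route as the paper: a sample-path coupling of the two configurations with the pathwise ordering $b_1'(t)\le b_1(t)$, $b_2'(t)\le b_2(t)$ as the key lemma (the paper's Lemma~\ref{lem:bat_induction}, proved there by induction over the background-transition epochs with explicit event sequences, which is just a more explicit rendering of your first-crossing regime-by-regime check), followed by the pointwise lost-load comparison for Agent~1, an energy-conservation-plus-overflow argument for the sum, subtraction for the cross-rate statement, and renewal/regeneration arguments for strictness. Your per-system balance identity $\LLR_1+\LLR_2=-\Exp{r_1+r_2}+\overline{\mathrm{OF}}$ is merely a repackaging of the paper's cross-system identity \eqref{eq:energy_cons} combined with its overflow comparison \eqref{eq:overflow_comparison}, so the two proofs coincide in substance.
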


The first two statements of Theorem~\ref{thm:monotonocity} are
intuitive; if the peak rate $c_i$
% {\color{blue}(sharing rate $c_i$, whis is the peak rate of energy
% transfer???)}  JK: Because this is the max rate that which Agent~i
% transfers energy to the other
of energy transfer from Agent~$i$ to Agent~$-i$ is increased, the loss
of load rate of Agent~$i$ increases, while that of Agent~$-i$
decreases. The third statement shows that an increase in $c_i$
decreases the overall loss of load rate across the two agents. In
other words, an increase in $c_i$ is detrimental to Agent~$i,$
beneficial to Agent~$-i,$ and beneficial to the overall system $\LLR.$
Immediate takeaways from Theorem~\ref{thm:monotonocity} are the
following. First, the `socially optimal' configuration is
$(c_{1,\max},c_{2,\max}).$ Second, interpreting $c_i$ to be
Agent~$i$'s action, the only Nash equilibrium of the resulting
two-player game is (0,0). This means that a non-cooperative setting
does not yield efficient sharing configurations. Instead, mutually
beneficial sharing configurations can only be sustained via
\emph{binding agreements} between the agents, in the spirit of
bargaining theory.

The remainder of this section is devoted to the proof of
Theorem~\ref{thm:monotonocity}. We are able to prove monotonicity
properties of the loss of load rates without any explicit expressions
of the same using sample path techniques. Specifically, we consider
two identical instances of our model, one operating under sharing
configuration $(c_1,c_2),$ where $c_1 < c_{1,\max}$ and the other
operating under sharing configuration $(c_1+\epsilon, c_2),$ where
$\epsilon>0$ such that $c_1 + \epsilon \leq c_{1,\max}.$ We refer to
the former system as the `original system' and the latter one as the
`$\sim$ system'; we denote parameters pertaining to this latter system
with a $\sim$ accent. To prove Theorem~\ref{thm:monotonocity}, it
suffices to show that
\begin{align*}
  \LLR_1 > \widetilde{\LLR_1}, \LLR_2 < \widetilde{\LLR_1}, \text{ and }  
  \LLR_1 + \LLR_2 <  \widetilde{\LLR_1} + \widetilde{\LLR_2}.
\end{align*}
We prove these inequalities by coupling the sample paths of the
background process across these systems. In other words, both systems
see exactly the same net generation at all times;\footnote{That is,
  the battery levels $b_1 (t),\ \tilde{b}_1(t)$ get modulated by the
  same net generation process $r_1(t)$ while battery levels
  $b_2(t),\ \tilde{b}_2(t)$ get modulated by same net generation
  process $r_2(t).$} all that distinguishes these systems is the upper
bound on the energy transfer rate from Agent~1 to Agent~2 ($c_1$ in
the original system versus $c_1 + \epsilon$ in the $\sim$ system). Our
first result compares the battery levels of the two agents across the
two systems.

\begin{lemma}
  {\label{lem:bat_induction}} On any sample path, under the coupling
  between the original and $\sim$ system described above, if 
  $b_1(0) = \tilde{b}_1(0),\ b_2(0) = \tilde{b}_2(0),$ then for all
  $t \geq 0,$
  \begin{equation}\label{eq:bat_induction}
    \tilde{b}_1(t) \leq b_1(t),\quad \tilde{b}_2(t)\leq b_2(t).
    %\quad \forall t\in [t_k\ t_{k+1}]
  \end{equation} 
  % with initial conditions as
  % $\tilde{b}_1(0)\leq b_1(0),\ \tilde{b}_2(0)\leq b_2(0)$.
\end{lemma}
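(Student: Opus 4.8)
The plan is to establish \eqref{eq:bat_induction} by a sample-path comparison that exploits the fact that the two systems are driven by the \emph{same} realization of $\{X(t)\}$ and differ only in the drain rate ($c_1$ versus $c_1+\epsilon$). First I would record that each of $b_1,b_2,\tilde b_1,\tilde b_2$ is the integral of a rate built from $r_1,r_2,c,c_1,c_2$ as in Table~\ref{table:bat_dynamics}; since $S$ is finite these rates are uniformly bounded, so all four trajectories are continuous in $t$, and between consecutive jumps of $\{X(t)\}$ the driving rates $r_1,r_2$ are constant, so the trajectories are piecewise linear and admit right derivatives everywhere. Writing $D^{+}$ for the right derivative, I would regard the right derivative of $b_i$ as a function $f_i(b_1,b_2;c_1)$ of the two battery levels and the drain rate, with $r_1,r_2,c,c_2$ fixed at their current common values.

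The engine of the proof is an infinitesimal monotonicity property of these rate functions, read off from Table~\ref{table:bat_dynamics}:
\begin{itemize}
\item $f_1$ is non-increasing in $c_1$ and non-decreasing in $b_2$;
\item $f_2$ is non-increasing in $c_1$ and non-decreasing in $b_1$.
\end{itemize}
Intuitively, raising $c_1$ forces Agent~1 to discharge faster when aiding Agent~2 (lowering $f_1$), while a smaller $b_2$ both enlarges the deficit Agent~1 must cover and makes Agent~2 full less often, so Agent~2 feeds back less overflow to Agent~1 (again lowering $f_1$). The symmetric effect — Agent~1 full less often, hence supplying less overflow into Agent~2 — governs $f_2$. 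Note that $c_2$ is identical in the two systems, so only the $c_1$- and cross-battery dependencies matter.

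Granting this property, the pair $(b_1,b_2)$ forms a \emph{cooperative} (quasimonotone) system — each $f_i$ is non-decreasing in the other agent's level — and replacing $c_1$ by $c_1+\epsilon$ lowers both rate functions pointwise. I would then run a first-crossing argument. Assuming \eqref{eq:bat_induction} fails, set $\tau=\inf\{t\ge 0:\ \tilde b_1(t)>b_1(t)\text{ or }\tilde b_2(t)>b_2(t)\}$; by continuity and the equal initial data both weak inequalities still hold at $\tau$, and the coordinate whose order is about to reverse, say coordinate~$1$, satisfies $\tilde b_1(\tau)=b_1(\tau)$ with $\tilde b_2(\tau)\le b_2(\tau)$. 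Chaining the two monotonicities then gives $f_1(\tilde b_1,\tilde b_2;c_1+\epsilon)\le f_1(\tilde b_1,\tilde b_2;c_1)\le f_1(b_1,b_2;c_1)$, i.e. $D^{+}\tilde b_1(\tau)\le D^{+}b_1(\tau)$, which is incompatible with a reversal; the coordinate-$2$ case is identical with the roles of $b_1,b_2$ swapped (and uses only the $c_1$- and $b_1$-dependence of $f_2$). Because the rates are piecewise constant in $t$ between jumps of $\{X(t)\}$, this derivative comparison can be upgraded, via a Kamke--Müller-type comparison on each (piecewise-affine, cooperative) inter-jump interval, to genuine preservation of the order; continuity at the jump epochs then propagates \eqref{eq:bat_induction} to all $t\ge 0$.

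The main obstacle is proving the infinitesimal monotonicity property, which is a case analysis over the nine cells of Table~\ref{table:bat_dynamics} rather than a single smooth estimate. The interior cell is trivial, but the empty and full cells carry indicators and nested minima whose monotonicity is delicate: in the $b_1=B_1$, $b_2=0$ cell, for example, increasing $c_1$ enlarges $\min(c_1,(r_2)_-)$ and can switch the indicator $\mathds{1}_{\{r_1\ge \min(c_1,(r_2)_-)\}}$ off, collapsing $f_2$ to $0$ — consistent with the claimed non-increase, but requiring one to track the switch. Likewise, monotonicity in $b_{-i}$ must be verified \emph{across} the Empty/Intermediate/Full transitions, where the one-sided reflections $\max(0,\cdot)$ and $\min(0,\cdot)$ act, and one must confirm that the piecewise expressions meet monotonically at these boundaries for both $f_1$ and $f_2$. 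This boundary bookkeeping is the lengthy but routine heart of the argument.
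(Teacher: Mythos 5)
Your proposal is correct in substance but takes a genuinely different route from the paper. The paper proves the lemma by induction over the transition instants $t_k$ of the background process, and within each inter-jump interval it exhaustively enumerates the sign patterns of $(r_1,r_2)$ and the possible temporal orderings of boundary-hitting events (sequences such as $f_1 < \tilde{e}_2 < \tilde{e}_1 < e_2 < e_1$), verifying order preservation case by case, largely via figures. You instead abstract the entire verification into a static quasimonotonicity property of the rate field --- $f_1$ and $f_2$ non-increasing in $c_1$ and non-decreasing in the \emph{other} battery's level, the structural reason being exactly the one the paper states informally: deficit-covering transfers drain the donor but never charge the recipient, while only overflow can charge --- and then close with a first-crossing/comparison argument. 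This buys a cleaner, more systematic proof: all case-checking is localized to cell-by-cell monotonicity of the dynamics, with no enumeration of event sequences, and the argument generalizes more readily (other parameter perturbations, more agents). What the paper's event-sequence induction buys in exchange is that it never needs comparison-theorem machinery for discontinuous vector fields: your rate functions are discontinuous across the Empty/Intermediate/Full cells, so the classical Kamke--M\"uller theorem does not apply off the shelf, and your argument genuinely rests on the piecewise-affine, finite-event structure between jumps of $\{X(t)\}$ (the derivative comparison at a crossing persists until the next event, and events per interval are finitely many) --- you acknowledge this, and it does go through, but it should be stated as the load-bearing step rather than an ``upgrade.'' One caution on your verification plan: you propose to read the monotonicity off Table~\ref{table:bat_dynamics}, but the printed table has an edge case that superficially violates it --- in the cell ($b_1$ Empty, $b_2$ Full) with $r_1>0>r_2$, the indicator $\mathds{1}_{\{r_2 \geq \min(c_2,(r_1)_-)\}}$ evaluates to $0$ and zeroes out $\frac{d}{dt}b_1$, even though no transfer is active and the mechanism of Section~\ref{subsec:sharing_strategy} clearly gives $\frac{d}{dt}b_1 = r_1$. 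This is an inaccuracy of the table in that regime; your monotonicity property holds for the mechanism as described in the text, and your verification should work from that description rather than from the table entries literally.
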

Lemma~\ref{lem:bat_induction} states that in the $\sim$ system, which
permits a higher rate of energy transfer from Agent~1 to Agent~2
relative to the original system, the battery occupancies of
\emph{both} agents get reduced. The intuition behind this result is
the following. Since Agent~1 transfers more energy to Agent~2 in the
$\sim$ system, its battery occupancy gets reduced relative to the
original system. Importantly however, this increased energy transfer
from Agent~1 to Agent~2 is utilized purely to cut the loss of load
suffered by Agent~2, and \emph{not} to charge its battery. Over time,
this causes a reduction in the battery occupancy of Agent~2 as well,
due to (i) reduced energy transfer from Agent~1 in the form of battery
overflow (which \emph{can} be used to charge Agent~2's battery), and
(ii) increased energy transfer to Agent~1 when it faces loss of
load. This intuition is formalized in our proof of
Lemma~\ref{lem:bat_induction}, which can be found in the appendix.

An immediate consequence of Lemma~\ref{lem:bat_induction} is the
following lemma. Let $\mathcal{O}_i(t)$ (respectively,
$\tilde{\mathcal{O}}_i(t)$) denote the cumulative energy \emph{lost}
due to battery overflow in the interval $[0,t]$ from the battery of
Agent~$i$ in the original system (respectively, in the $\sim$
system). Also, let $\ell_i(t)$ (respectively, $\tilde{\ell}_i(t)$)
denote the cumulative lost load by Agent~$i$ in the original system
(respectively, in the $\sim$ system). Specifically, note that for
$i \in \{1,2\},$
\begin{align*}
  \LLR_i = \lim_{t \ra \infty}\frac{\ell_i(t)}{t}, \quad
  \widetilde{\LLR}_i = \lim_{t \ra \infty}\frac{\tilde{\ell}_i(t)}{t}. 
\end{align*}
% Lemma~\ref{lem:bat_induction} implies the following relations
% between the energy lost due to overflow and the load lost by
% Agent~$i$ across the original system and the $\sim$ system.

\begin{lemma}{\label{lem:LOL_overflow}}
  On any sample path, under the coupling between the original and
  $\sim$ system described above, if
  $b_1(0) = \tilde{b}_1(0),\ b_2(0) = \tilde{b}_2(0),$ then for all
  $t \geq 0,$
  % For any time slot $[t_k\ t_{k+1}]$, within which neither of the
  % background processes $X_1(t)$ and $X_2(t)$ changes its state, we
  % have
  \begin{align}
    \label{eq:lol_comparison}
    \tilde{\ell}_1(t) &\geq \ell_1(t), \\
    \label{eq:overflow_comparison}
    \tilde{\mathcal{O}}_i(t)  &\leq  \mathcal{O}_i(t) \quad \forall\ i\in \{1,2\}.
  \end{align} 
\end{lemma}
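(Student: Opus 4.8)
The plan is to establish both inequalities at the level of \emph{instantaneous rates} and then integrate, taking Lemma~\ref{lem:bat_induction} (the pathwise domination $\tilde b_i(t) \le b_i(t)$) as the sole input. Write the instantaneous lost-load rate of Agent~$i$ as $\lambda_i(t) = \indicator{b_i(t)=0,\ r_i(t)<0}\,[-r_i(t)-\etr_{-i}(t)]_+$, so that $\ell_i(t) = \int_0^t \lambda_i(s)\,ds$, and let $o_i(t) \ge 0$ denote the instantaneous overflow rate from Agent~$i$'s battery, so that $\mathcal{O}_i(t) = \int_0^t o_i(s)\,ds$ (both rates, and their tilde counterparts, are read off directly from Table~\ref{table:bat_dynamics}). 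It then suffices to prove the pointwise bounds $\tilde\lambda_1(t) \ge \lambda_1(t)$ and $\tilde o_i(t) \le o_i(t)$ for $i \in \{1,2\}$, since integrating over $[0,t]$ yields \eqref{eq:lol_comparison} and \eqref{eq:overflow_comparison}.

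For the lost-load bound, first observe that on $\{r_1(t) \ge 0\}$ both rates vanish, and on $\{r_1(t)<0,\ b_1(t)>0\}$ we have $\lambda_1(t)=0 \le \tilde\lambda_1(t)$; the only remaining case is $b_1(t)=0$, where $\tilde b_1(t) \le b_1(t) = 0$ forces $\tilde b_1(t)=0$ as well, so both systems are simultaneously in deficit. There it remains to show $\tilde\etr_2(t) \le \etr_2(t)$, since $[-r_1-\etr_2]_+$ is nonincreasing in $\etr_2$. The key observation is that, holding $r_1,r_2$ fixed and with the \emph{same} drain parameter $c_2$ in both systems (only $c_1$ differs), the transfer rate from Agent~2 to Agent~1 is a nondecreasing function of Agent~2's battery level: it equals $\min((r_2)_+,c_2,-r_1)$ when $b_2=0$, rises to $\min(c_2,-r_1)$ when $b_2 \in (0,B_2)$, and is at least as large when $b_2=B_2$. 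Combined with $\tilde b_2(t) \le b_2(t)$ from Lemma~\ref{lem:bat_induction}, this gives $\tilde\etr_2(t) \le \etr_2(t)$ and hence $\tilde\lambda_1(t) \ge \lambda_1(t)$.

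For the overflow bound, note that $o_i(t)$ is nonzero only when $b_i(t)=B_i$; since $\tilde b_i(t) \le b_i(t)$, whenever $\tilde b_i(t)=B_i$ we also have $b_i(t)=B_i$, so it suffices to compare the two rates on the event that Agent~$i$'s battery is full in both systems. Reading off Table~\ref{table:bat_dynamics}, when Agent~$i$ is full the overflow rate equals $(r_i)_+$ if Agent~$-i$ is also full, drops to $[r_i-c]_+$ if Agent~$-i$ is intermediate (or empty but not in deficit), and is $0$ or $[r_i-c]_+$ if Agent~$-i$ is empty and in deficit (the former when Agent~$i$'s surplus is fully absorbed by the transfer). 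Thus $o_i$, viewed as a function of Agent~$-i$'s battery state, is nondecreasing in that state, so $\tilde b_{-i}(t) \le b_{-i}(t)$ yields $\tilde o_i(t) \le o_i(t)$. The one extra wrinkle arises for $i=1$ in the empty-and-in-deficit sub-case, where the overflow also depends on the drain parameter: a larger value makes it more likely that Agent~1's surplus is fully transferred, so $o_1$ is nonincreasing in $c_1$; since the $\sim$ system uses $c_1+\epsilon > c_1$, this only reinforces $\tilde o_1(t) \le o_1(t)$. For $i=2$ the relevant parameter is $c_2$, which is common to both systems, so no such wrinkle appears.

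The main obstacle is the overflow step: one must carefully tabulate $o_i$ across every combination of the two batteries' boundary states and the signs of $r_1,r_2$ (the lower-left and upper-right corners of Table~\ref{table:bat_dynamics} are the delicate ones), and verify the claimed monotonicity in Agent~$-i$'s state while correctly accounting for the fact that the parameter $c_1$ itself differs across the two systems. By contrast, once these pointwise rate comparisons are in hand, the passage to \eqref{eq:lol_comparison}--\eqref{eq:overflow_comparison} by integration is immediate, and no further use of the explicit (and unavailable) form of $\LLR_i(c_1,c_2)$ is required.
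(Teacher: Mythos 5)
Your proof is correct and takes essentially the same route as the paper's: both inequalities are obtained pointwise in time from the battery domination of Lemma~\ref{lem:bat_induction} (empty/full in the $\sim$ system forces the same boundary state in the original system, and the transfer and overflow rates are compared via their monotonicity in the other agent's battery level, with $c_2$ common to both systems) and then integrated. Your explicit tabulation of the overflow rate across Agent~$-i$'s boundary states, including the observation that the larger drain parameter $c_1+\epsilon$ only reinforces $\tilde{\mathcal{O}}_1(t) \leq \mathcal{O}_1(t)$, simply spells out details that the paper's terser argument leaves implicit.
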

Lemma~\ref{lem:LOL_overflow} states that the $\sim$ system `wastes'
less energy due to overflow from either battery, as compared to the
original system. Further, it shows that in the $\sim$ system, Agent~1
incurs greater loss of load compared to the original system.
\begin{proof}[Proof of Lemma~\ref{lem:LOL_overflow}]
  Since $\tilde{b}_1(t) \leq b_1(t)$ (Lemma~\ref{lem:bat_induction}),
  if Agent~1 faces loss of load in the original system at any
  time~$t$, then it also faces loss of load in the $\sim$ system at
  that time. Moreover, since (i) $\tilde{b}_2(t) \leq b_2(t)$ (also by
  Lemma~\ref{lem:bat_induction}), and (ii) the bound $c_2$ on the rate
  of energy transfer from Agent~2 to Agent~1 is the same in both
  systems, it follows that the instantaneous rate of lost load in the
  $\sim$ system exceeds that in the original system at all times. This
  implies~\eqref{eq:lol_comparison}. It is important to note that a
  similar argument \emph{does not} hold for Agent~2, since it enjoys a
  higher rate of energy transfer from Agent~1 in the $\sim$ system.

  % and the fact that sharing rate constraint with which player 2 shares
  % energy with player 1 is same i.e. $c_2$ both in original as well as
  % in $\sim$ system, the instantaneous loss of load associated with
  % player 1 in original and $\sim$ system are related as
  % $\frac{d\tilde{\ell}_1(t)}{dt}\geq \frac{d\ell_1(t)}{dt}$. The
  % equality of loss of rates holds when both $b_1$ and $\tilde{b}_1$
  % are empty. It is important to note that similar inequality doesn't
  % hold for instantaneous rates of player 2 (i.e. in $b_2$ and
  % $\tilde{b}_2$) because of different sharing rate constraints $c_1$
  % and $c_1+\epsilon$. Now, since the instantaneous rate of loss of
  % load of player 1 in $\sim$ system (i.e. in $\tilde{b}_1$) is at
  % least as much as in original system (i.e. in $b_1$), we get
  % $\tilde{\ell}_1(t) \geq \ell_1(t)$ where it is clear that
  % $\tilde{\ell}_1$ and $\ell_1$ are piecewise linear monotone
  % continuous functions.

  A similar line of reasoning can also be used to prove the
  inequalities for cumulative energy lost due to
  overflow~\eqref{eq:overflow_comparison}. Since
  $\tilde{b}_i(t) \leq b_i(t)$ (Lemma~\ref{lem:bat_induction}) and
  given that the batteries are driven by the same net generation
  process across both systems, an overflow out of Battery~$i$ in the
  $\sim$ system at any time implies an overflow at at least the same
  rate out of Battery~$i$ in the original system. \vd{why at least??
    It is the same rate.} \jk{Will discuss.}
  % coupled by the same process $X_i$, the
  % instantaneous rate of overflow in $\sim$ system is less than or
  % equal to that of in original system for both players, i.e.
  % $\frac{d\tilde{\mathcal{O}}_i(t)}{dt}\leq
  % \frac{d{\mathcal{O}}_i(t)}{dt}\ \forall i$. Therefore the argument
  % $\tilde{\mathcal{O}}_i(t) \leq \mathcal{O}_i(t)$ follows trivially.
\end{proof}

% Loss of load for a player $i$ occurs when his battery $b_i(\cdot)$
% is empty and the background Markov process $X_i(\cdot)$ has a
% negative discharging rate $r_i(\cdot)<0$. Cumulative lost load in
% the interval $(0,\ t]$ is given by
% $\ell_i(t) :=\int_0^t \mathbbm{1}_{\{r_i(\tau)<0, b_i(\tau)=0\}}
% d\tau$. Loss of load rate $LLR_i$ for player $i$ is the long run
% time average of this quantity.

We are now ready to give the proof of Theorem~\ref{thm:monotonocity}.

\begin{proof}[Proof of Theorem~\ref{thm:monotonocity}]
  From Lemma~\ref{lem:LOL_overflow}, it follows that
  \begin{align*}
    \lim_{t \rightarrow \infty} \frac {\tilde{\ell}_1(t)}{t}  \geq \lim_{t \rightarrow \infty} \frac{\ell_1(t)}{t} \quad 
    \Rightarrow \widetilde{\LLR}_1 \geq \LLR_1. 
  \end{align*}
  % \textbf{(Footnote: Recall from \eqref{eq:llr_standalone} that
  % positive recurrence of background process $X(t)$ ensures that
  % above limits exists even though the limit in
  % \eqref{eq:llr_standalone} is for stand alone operation.)}
  That the latter inequality is \emph{strict} follows from a
  straightforward renewal reward argument, which we sketch here. Given
  our coupling between the original and~$\sim$ systems, consider a
  renewal process, where the renewal instants correspond to hitting
  times of the configuration $X(t) = s_2$ and $b_i(t) = 0$ for
  $i = 1,2.$ Let $\ell_i(n)$ (respectively, $\tilde{\ell}_i(n)$)
  denote the lost load in the $n$th renewal cycle by Agent~$i$ in the
  original (respectively, the $\sim$) system. Thus, by the renewal
  reward theorem,
  $$\LLR_i = \frac{\Exp{\ell_i(1)}}{\Exp{T}}, \quad \widetilde{\LLR}_i =
  \frac{\Exp{\tilde{\ell}_i(1)}}{\Exp{T}},$$ where $T$ denotes the
  length of a typical renewal cycle. Using the above characterization,
  it suffices to show that
  $\Exp{\tilde{\ell}_i(1)} > \Exp{\ell_i(1)}.$ Since
  $\tilde{\ell}_i(1) \geq \ell_i(1)$ on all sample paths, one has to
  simply argue that with positive probability,
  $\tilde{\ell}_i(1) > \ell_i(1).$ This is not hard to show.
  % ---in any
  % renewal cycle, a suitable sequence of transitions from state $s_1$
  % to $s_3,$ and then back to $s_1$ would do this.

  Thus, we have
  \begin{equation}
    \label{eq:LLR1_inequality}
    \widetilde{\LLR}_1 > \LLR_1.
  \end{equation}
  It therefore follows more generally that $\LLR_i(c_1,c_2)$ is
  strictly increasing in $c_i$.

  Next, recall that under our coupling of the background process for
  the original system and the $\sim$ system, the total energy received
  in original system within any time interval $[0,\ t]$ is equal to
  the total energy received in $\sim$ system within the same time
  interval. Thus,
  \begin{multline}\label{eq:energy_cons}
    b_1(t)+b_2(t)+\ell_1^c(t)+\ell_2^c(t)+ \mathcal{O}_1(t) + \mathcal{O}_2(t)\\
    = \tilde{b}_1(t)+\tilde{b}_2(t)+\tilde{\ell}_1^c(t)+\tilde{\ell}_2^c(t)+ \tilde{\mathcal{O}}_1(t) + \tilde{\mathcal{O}}_2(t),
  \end{multline}
  where $\ell_i^c(t)$ (respectively, $\tilde{\ell}_i^c(t)$) is the
  cumulative load catered (i.e., demand supplied) by Agent~$i$ over
  the interval $[0\ t]$ in the original system (respectively, the
  $\sim$ system). Since
  ${\mathcal{O}}_i(t) \geq \tilde{{\mathcal{O}}}_i(t)$ (by
  Lemma~\ref{lem:LOL_overflow}),
  \begin{equation}\label{eq:tot_overflow_relation}
    {\mathcal{O}}_1(t) + {\mathcal{O}}_2(t) \geq  \tilde{{\mathcal{O}}}_1(t) + \tilde{{\mathcal{O}}}_2(t).
  \end{equation} 
  Similarly, since $ b_i(t) \geq \tilde{b}_i(t)$ (by
  Lemma~\ref{lem:bat_induction}), we have
  \begin{equation}\label{eq:tot_bat_relation}
    b_1(t)+b_2(t) \geq \tilde{b}_1(t)+\tilde{b}_2(t).
  \end{equation}
  Therefore, using \eqref{eq:tot_overflow_relation} and
  \eqref{eq:tot_bat_relation} in \eqref{eq:energy_cons}, we get
$$\ell_1^c(t)+\ell_2^c(t) \leq \tilde{\ell}_1^c(t)+\tilde{\ell}_2^c(t)$$
i.e., the total cumulative load served or demand supplied over $[0,t]$
in the original system is less than or equal to that in the $\sim$
system. It follows now that within the interval $[0,t]$, the
cumulative lost load in original system is greater than or equal to
that in $\sim$ system, i.e.,
$\tilde{\ell}_1(t)+\tilde{\ell}_2(t) \leq \ell_1(t) +\ell_2(t),$ which implies
\begin{align}
%	\tilde{\ell}_1(t)+\tilde{\ell}_2(t)  &\leq  \ell_1(t) +\ell_2(t) \nonumber\\
  \lim_{t \rightarrow \infty} \bigg\{\frac{\tilde{\ell}_1(t)}{t}  +  \frac{\tilde{\ell}_2(t)}{t} \bigg\} &\leq \lim_{t \rightarrow \infty} \bigg\{\frac{\ell_1(t)}{t} +\frac{\ell_2(t)}{t} \bigg\} \nonumber \\
  \imp \widetilde{\LLR}_1 +\widetilde{\LLR}_2 &\leq \LLR_1 +\LLR_2. \nonumber
\end{align}
%\textbf{(same footnote for the limit to exists?)}
Again, it can be shown that the above inequality is strict via a
renewal reward argument, so that
\begin{equation}
  \label{eq:tot_LLR_inequality}
  \widetilde{\LLR}_1 +\widetilde{\LLR}_2 < \LLR_1 +\LLR_2.
\end{equation}
This shows that more generally, $\LLR_1(c_1,c_2) + \LLR_1(c_1,c_2)$ is
strictly decreasing with respect to any $c_i.$

Finally, it follows from \eqref{eq:LLR1_inequality} and
\eqref{eq:tot_LLR_inequality} that 
%\begin{equation*}
 $ \widetilde{\LLR}_2 < \LLR_2,$
%\end{equation*}
which establishes that $\LLR_i(c_1,c_2)$ is strictly decreasing in
$c_{-i}$.
\end{proof}

Armed with Theorem~\ref{thm:monotonocity}, we address the existence of
mutually beneficial sharing configurations, and the structure of the
Pareto frontier of efficient sharing configurations in the following
section. We conclude this section by noting that $\LLR_i(\cdot,\cdot)$
is a continuous function, specifically Lipschitz continuous.
\begin{lemma}
  \label{lemma:Lipschitz}
  For $i \in \{1,2\},$
  $|\LLR_i(c_1 + \epsilon,c_2) - \LLR_i(c_1,c_2)| \leq \epsilon.$
\end{lemma}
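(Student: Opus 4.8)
The plan is to reuse the sample-path coupling of the original system (configuration $(c_1,c_2)$) and the $\sim$ system (configuration $(c_1+\epsilon,c_2)$) set up in the proof of Theorem~\ref{thm:monotonocity}; by relabeling the agents the same argument handles a perturbation of $c_2$, so it suffices to treat the perturbation of $c_1$. The key reduction is to prove the single cumulative bound $\ell_2(t)-\tilde\ell_2(t)\le \epsilon t$ for all $t$. Granting this, dividing by $t$ and letting $t\to\infty$ gives $\LLR_2-\widetilde{\LLR}_2\le\epsilon$; combined with $\widetilde{\LLR}_2\le\LLR_2$ (established in the proof of Theorem~\ref{thm:monotonocity}) this yields $|\widetilde{\LLR}_2-\LLR_2|\le\epsilon$. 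For Agent~1 I would avoid a separate computation and instead sandwich: combining $\tilde\ell_1(t)\ge\ell_1(t)$ (Lemma~\ref{lem:LOL_overflow}) with the total-load inequality $\tilde\ell_1(t)+\tilde\ell_2(t)\le\ell_1(t)+\ell_2(t)$ (from the proof of Theorem~\ref{thm:monotonocity}) gives $0\le\tilde\ell_1(t)-\ell_1(t)\le\ell_2(t)-\tilde\ell_2(t)\le\epsilon t$, whence $|\widetilde{\LLR}_1-\LLR_1|\le\epsilon$.

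To prove $\ell_2(t)-\tilde\ell_2(t)\le\epsilon t$, I would write each cumulative lost load as the integral of its instantaneous rate and bound the integrand difference pointwise. Writing $\phi(t)=\ell_2(t)-\tilde\ell_2(t)$ (with $\phi(0)=0$), at a.e.\ time $s$ the derivative is $\dot\phi(s)=[-r_2-\etr_1]_+-[-r_2-\tilde\etr_1]_+$ on the event that Agent~2 is in deficit in the original system, and $\dot\phi(s)\le 0$ otherwise. Here I would invoke Lemma~\ref{lem:bat_induction}: since $\tilde b_2(s)\le b_2(s)$, whenever the original system's Agent~2 is in deficit ($b_2=0$, $r_2<0$) so is the $\sim$ system's, so the rates are directly comparable; in the complementary case the original rate is zero and $\dot\phi(s)\le 0$. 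Since $[\cdot]_+$ is subadditive, $\dot\phi(s)\le[\tilde\etr_1(s)-\etr_1(s)]_+$, so it suffices to establish the pointwise energy-transfer bound $\tilde\etr_1(s)-\etr_1(s)\le\epsilon$ on this event; integrating then gives $\phi(t)\le\epsilon t$.

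The main obstacle is exactly this last step, because the two systems differ not only in the drain cap ($c_1$ versus $c_1+\epsilon$) but also in Agent~1's battery level ($\tilde b_1(s)\le b_1(s)$ by Lemma~\ref{lem:bat_induction}), so a naive ``Lipschitz in $c_1$'' estimate at a fixed state does not immediately apply. I would separate the two effects through two monotonicity facts about the transfer rate, viewed (from Table~\ref{table:bat_dynamics} and the mechanism of Section~\ref{subsec:sharing_strategy}) as a function $\etr(\cdot,\cdot)$ of Agent~1's cap and its own battery level while Agent~2 is in deficit: (i) for a fixed cap, $\etr$ is non-decreasing in Agent~1's battery level (empty $\le$ intermediate $\le$ full, the full case using $\min(c,r_1)\ge\min(c_1,-r_2)$ when $r_1>\min(c_1,-r_2)$); and (ii) for a fixed battery regime, $\etr$ is $1$-Lipschitz in the cap, via $\min(a+\epsilon,\cdot)\le\min(a,\cdot)+\epsilon$. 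Chaining these, $\tilde\etr_1=\etr(c_1+\epsilon,\tilde b_1)\le\etr(c_1+\epsilon,b_1)\le\etr(c_1,b_1)+\epsilon=\etr_1+\epsilon$, the desired bound. The only delicate verification is the full-battery regime, where overflow routing makes the transfer depend on $\min(c,r_1)$ rather than on the cap; I would check each sub-case there (noting that $\tilde b_1=B_1$ forces $b_1=B_1$, so ``full in the $\sim$ system but not in the original'' cannot occur), after which (i) and (ii) hold in all regimes and the argument closes.
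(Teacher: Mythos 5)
Your proposal is correct, and it uses the paper's coupling framework but inverts the paper's decomposition in a way worth noting. The paper's proof is a two-line sketch that bounds Agent~1 \emph{directly}: it asserts $\tilde{\ell}_1(t) \leq \ell_1(t) + \epsilon t$ on the grounds that transferring at an additional rate of at most $\epsilon$ costs Agent~1 at most $\epsilon t$ extra lost load, and then says ``a similar argument also applies to $\LLR_2$.'' You instead prove the pointwise transfer-rate bound $\tilde{\etr}_1(s) - \etr_1(s) \leq \epsilon$ rigorously --- correctly splitting it into monotonicity of $\etr$ in Agent~1's battery level at fixed cap (needed because Lemma~\ref{lem:bat_induction} only gives $\tilde{b}_1(s) \leq b_1(s)$, not equality of regimes) and $1$-Lipschitzness in the cap at fixed regime, with the full-battery overflow case checked separately --- which yields $\ell_2(t) - \tilde{\ell}_2(t) \leq \epsilon t$ for Agent~2; you then recover Agent~1 for free by sandwiching $0 \leq \tilde{\ell}_1(t) - \ell_1(t) \leq \ell_2(t) - \tilde{\ell}_2(t)$ via Lemma~\ref{lem:LOL_overflow} and the total-lost-load inequality from the proof of Theorem~\ref{thm:monotonocity}. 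This route is arguably the right way to make the paper's sketch airtight: the paper's direct claim for Agent~1 quietly glosses over second-order coupling effects (since $\tilde{b}_2(t) \leq b_2(t)$, Agent~1 also receives \emph{less} help from Agent~2 in the $\sim$ system, both as overflow and as deficit relief, so ``extra drain at rate $\epsilon$'' does not immediately bound Agent~1's extra lost load), whereas your conservation-based sandwich absorbs all such feedback automatically. What you trade for this rigor is length and a dependence on the machinery of Theorem~\ref{thm:monotonocity}'s proof, whereas the paper's version is self-contained (if informal); both deliver the same constant $\epsilon$, and your relabeling remark correctly extends the bound to perturbations of $c_2$.
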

\begin{proof}
%  Consider $\LLR_1(c_1 + \epsilon,c_2) - \LLR_1(c_1,c_2).$
  Using the same notation as before, it is not hard to see that
  $\tilde{\ell}_1(t) \leq \ell_1(t) +\epsilon t;$ indeed, occasionally
  transferring energy at an additional rate of $\epsilon$ to Agent~2
  results in Agent~1 losing at most $\epsilon t$ additional load until
  time~$t.$ This implies
  $\LLR_1(c_1 + \epsilon,c_2) - \LLR_1(c_1,c_2) \leq \epsilon.$ A
  similar argument also applies to $\LLR_2.$
\end{proof}
An immediate consequence of Lemma~\ref{lemma:Lipschitz} is that
$\LLR_i(\cdot,\cdot)$ is differentiable almost everywhere.

\section{Pareto-optimal sharing}
\label{sec:pareto}

In this section, our goal is to shed light on meaningful sharing
configurations that the agents might agree upon from a game theoretic
standpoint. Our first step is to show that there exist sharing
configurations that are beneficial to both agents.

\subsection{Existence of mutually beneficial configurations}
\label{subsec:existence_of_beneficial_config}

Recall that $c_i$ denotes the maximum rate at which Agent~$i$ shares
energy with Agent~$-i$ when the latter faces loss of load. Moreover,
$c_i \leq c_{i,\max};$ this constraint incorporates any physical
transmission constraints that limit the flow of energy from one agent
to the other. Thus, the space of possible sharing configurations is
given by
$$\mathcal{X}=[0,\ c_{1,\max}] \times [0,\ c_{2,\max}].$$ We also
define $$\mathcal{X}^o=[0,\ c_{1,\max})\times [0,\ c_{2,\max}).$$ Note
that $\mathcal{X}^o$ excludes those configurations where either or
both of the parameters $c_i$ take their maximum value.

Our first result shows that given any configuration in
$\mathcal{X}^o,$ it is possible to perturb it in such a manner that
\emph{both agents are comparatively better off}. Specializing this
result to the configuration $(0,0),$ we conclude that mutually
beneficial configurations are guaranteed to exist.\footnote{Recall
  that the configuration $(0,0)$ is itself mutually beneficial as
  compared to standalone operation (i.e., without overflow sharing).}
Remarkably, this is true no matter how asymmetric the net generation
processes might be across the agents; it is \emph{always} possible to
find sharing arrangements that enable the agents to help one another.

% As per the sharing strategy described in \cref{sec:model}, the
% maximum sharing rate constraint $c_i$ which a player can decides
% (i.e. maximum sharing which a player allows) is the absolute value
% of maximum rate with which the other player can discharge i.e.
% $c_{i\max}=-\min {X_{-i}(t)}$. Even if the player $i$ decides to fix
% the value of $c_i$ more than this limit, the sharing strategy will
% not allow energy sharing at rates more than the above limit. (see
% \cref{table:bat_dynamics} and the table) (CITE TABLE). Considering
% the physical sharing rate constraint $c$ of the sharing link or
% channel, which is a sharing link parameter, the maximum sharing rate
% which a player can decide is given by
% $c_{i\max}=\min\{c,-\min {X_{-i}(t)}\}$. We define the possible
% region of sharing configuration as
% $\mathcal{X}=[0,\ c_{1\max}]\times [0,\ c_{2\max}]$,
% $\mathcal{X}^o=[0,\ c_{1\max})\times [0,\ c_{2\max})$.

% Using \cref{thm:monotonocity}, we will show that any sharing
% configuration $(c_1,c_2)\in \mathcal{X}^o$ can be further improved
% for more mutual-benefit which results in further reduction of $LLR$
% for both the players. This is formally stated in the following
% lemma.

\begin{lemma}{\label{lem:grad_LLR}}
  For any sharing configuration $(c_1,c_2) \in \mathcal{X}^o$, there
  exists a direction $(1,\theta)$, where $\theta >0$, such that the
  gradients $\nabla \LLR_i(c_1,c_2)$ satisfy
  $ \nabla \LLR_i(c_1,c_2) \cdot (1,\theta) <0 \ \forall \ i \in \{1,2\}.$
\end{lemma}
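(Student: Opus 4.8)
The plan is to work at a point $(c_1,c_2)$ where both $\LLR_1$ and $\LLR_2$ are differentiable (guaranteed almost everywhere by Lemma~\ref{lemma:Lipschitz}) and to read the signs of the four partial derivatives straight off Theorem~\ref{thm:monotonocity}. Write $\nabla\LLR_1 = (a,-\beta)$ and $\nabla\LLR_2 = (-\pi,q)$, so that the first two statements of Theorem~\ref{thm:monotonocity} give $a=\partial_{c_1}\LLR_1\geq 0$, $\beta=-\partial_{c_2}\LLR_1\geq 0$, $\pi=-\partial_{c_1}\LLR_2\geq 0$, and $q=\partial_{c_2}\LLR_2\geq 0$. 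For a candidate direction $(1,\theta)$ with $\theta>0$, the two required inequalities $\nabla\LLR_1\cdot(1,\theta)<0$ and $\nabla\LLR_2\cdot(1,\theta)<0$ become $a-\theta\beta<0$ and $-\pi+\theta q<0$, i.e.
\[
  \frac{a}{\beta} \;<\; \theta \;<\; \frac{\pi}{q}.
\]
Thus the entire lemma reduces to showing that this interval for $\theta$ is nonempty and contained in $(0,\infty)$.

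The key step is to use the \emph{third} statement of Theorem~\ref{thm:monotonocity} — that $\LLR_1+\LLR_2$ is strictly decreasing in each $c_i$ — to supply exactly the inequalities that open up this interval. Differentiating, decrease of the sum in $c_1$ gives $a-\pi<0$, i.e.\ $a<\pi$, and decrease in $c_2$ gives $q-\beta<0$, i.e.\ $q<\beta$. These have a clean reading: the marginal self-harm to Agent~$i$ of raising $c_i$ is strictly smaller than the marginal benefit it confers on Agent~$-i$. Since all four quantities are nonnegative, the two strict inequalities force $\beta>0$ and $\pi>0$, and multiplying $a<\pi$ by $q<\beta$ yields $aq<\pi\beta$, which is precisely $a/\beta<\pi/q$ (handling the harmless edge cases $a=0$ or $q=0$ separately, where one endpoint degenerates to $0$ or $+\infty$). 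Hence the interval is nonempty; any $\theta$ in it is positive because $a/\beta\geq 0$, and it gives the desired common descent direction. The hypothesis $(c_1,c_2)\in\mathcal X^o$ is what lets us increase both coordinates and keep Theorem~\ref{thm:monotonocity} in force.

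The main obstacle is subtler than this algebra: Theorem~\ref{thm:monotonocity} asserts strict monotonicity of the \emph{functions}, whereas the argument needs the strict \emph{derivative} inequalities $a<\pi$ and $q<\beta$ at the given point, and strict function monotonicity only yields $a\leq\pi$, $q\leq\beta$ there. One cannot retreat to finite differences to dodge this, because the Lipschitz bound of Lemma~\ref{lemma:Lipschitz} caps the favorable increments at the same rate it caps the unfavorable ones — in particular it rules out the symmetric direction $\theta=1$ — so genuinely quantitative (not merely sign) information about the rates is unavoidable. I would close this gap by upgrading strict monotonicity to strictly signed derivatives almost everywhere, reusing the renewal-reward representation from the proof of Theorem~\ref{thm:monotonocity} (which shows the per-cycle lost-load increments are positive with positive probability, hence at a strictly positive rate), and then dealing with the measure-zero set of non-differentiability points via one-sided directional derivatives together with the Lipschitz continuity. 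The bulk of the remaining effort is precisely this upgrade; once the strictly signed derivatives are in hand, the direction-finding is the short computation above.
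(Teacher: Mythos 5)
Your argument is essentially the paper's own proof: both take the strict decrease of $\LLR_1+\LLR_2$ in each coordinate (third part of Theorem~\ref{thm:monotonocity}), turn it into the derivative inequalities $\partial_{c_1}\LLR_1<-\partial_{c_1}\LLR_2$ and $\partial_{c_2}\LLR_2<-\partial_{c_2}\LLR_1$, multiply the two, and pick $\theta>0$ strictly between the ratios $\frac{\partial_{c_1}\LLR_1}{-\partial_{c_2}\LLR_1}$ and $\frac{-\partial_{c_1}\LLR_2}{\partial_{c_2}\LLR_2}$; your $a,\beta,\pi,q$ bookkeeping is just a cleaner rendering of the same computation. Where you differ is in rigor, and the difference favors you: the paper ``differentiates'' strict monotonicity and divides by $-\partial_{c_2}\LLR_1$ and $\partial_{c_2}\LLR_2$ without comment, whereas you correctly observe that strict monotonicity of a function yields only \emph{non-strict} sign information on its derivative at a given point (consider $x\mapsto x^3$ at the origin), and that the denominators must be shown nonzero --- both of which your $a<\pi$, $q<\beta$ reduction and edge-case handling address. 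So the gap you flag is in the published argument, not in yours. One caution on your proposed repair: the renewal-reward step in the paper shows, for each \emph{fixed} $\epsilon>0$, that $\Exp{\tilde{\ell}_i(1)}>\Exp{\ell_i(1)}$; to conclude a strictly signed derivative at the point you need the per-cycle gap to be bounded below by a constant multiple of $\epsilon$ uniformly as $\epsilon\downarrow 0$ (e.g., exhibit an event, of probability bounded away from zero for all small $\epsilon$, on which the extra transfer at rate $\epsilon$ is sustained for a duration bounded away from zero), and the measure-zero set of non-differentiability must indeed be handled via one-sided directional derivatives plus Lipschitz continuity as you indicate. That upgrade is the genuine remaining content, and it is absent from the paper as written.
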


% When \cref{lem:grad_LLR} is applied at sharing rate configuration
% $(0,0)$, we see that there exists mutually beneficial sharing
% configuration $(c_1,c_2)\in \mathcal{X}^o$ which is better than
% $(0,0)$.  Thus the existence of mutually beneficial sharing
% configuration is guaranteed.

\begin{proof}[Proof of Lemma~\ref{lem:grad_LLR}]
  Since the total loss of load rate $\LLR_1(c_1,c_2)+\LLR_2(c_1,c_2)$
  strictly decreases with $c_i$ (see Theorem~\ref{thm:monotonocity}),
  differentiating with respect to $c_1$ and $c_2$, we get
  \begin{align*}
    &\frac{\partial \LLR_1}{\partial c_1} < -\frac{\partial \LLR_2}{\partial c_1}, \quad
    \frac{\partial \LLR_2}{\partial c_2} < -\frac{\partial \LLR_1}{\partial c_2}. \\
    \Rightarrow  & \frac{\partial \LLR_1}{\partial c_1} \frac{\partial \LLR_2}{\partial c_2}  < (-\frac{\partial \LLR_2}{\partial c_1}) (-\frac{\partial \LLR_1}{\partial c_2} )
  \imp  \displaystyle \frac{\frac{\partial \LLR_1}{\partial c_1}}{-\frac{\partial \LLR_1}{\partial c_2}} < \frac{ -\frac{\partial \LLR_2}{\partial c_1}}{\frac{\partial \LLR_2}{\partial c_2}}
  \end{align*}
  Therefore, there exists a $\theta>0$ such that 
  \begin{equation*}
    \frac{\frac{\partial \LLR_1}{\partial c_1}}{-\frac{\partial \LLR_1}{\partial c_2}} < \theta < \frac{ -\frac{\partial \LLR_2}{\partial c_1}}{\frac{\partial \LLR_2}{\partial c_2}},
  \end{equation*}
  which implies that $ \nabla \LLR_1 \cdot (1,\theta) <0$
  and $ \nabla \LLR_2 \cdot (1,\theta) <0$.
\end{proof}

Next, we turn to the set of Pareto-optimal sharing configurations.

\subsection{Pareto frontier}

We begin by defining Pareto-optimal configurations.
\begin{definition}
  An energy sharing configuration $(c_1,c_2) \in \X$ is Pareto-optimal
  if
  %\begin{enumerate}
    % \item $\LLR_i(c_1,c_2) < \LLR_i(0,0)$ for all $i \in \{1,2\},$
    % \item
  there does not exist $(\hat{c}_1,\hat{c}_2) \in \X$ such that
  $\LLR_i(\hat{c}_1,\hat{c}_2) \leq \LLR_i(c_1,c_2)$ for all
  $i \in \{1,2\},$ the inequality being strict for at least one $i.$
    % \end{enumerate}
\end{definition}
The set of Pareto-optimal sharing configurations is denoted
by~$\mathcal{P},$ and is referred to as the Pareto
frontier. Pareto-optimal sharing configurations are \emph{efficient},
in the sense that there do not exist configurations that dominate
them. In other words, over the Pareto frontier, the $\LLR$ of any
agent can be only improved by increasing the $\LLR$ of the other.
% Our definition above differs from the standard definition in that it
% also imposes that Pareto-optimal configurations dominate the
% configuration $(0,0),$ which corresponds to the agents \textbf{only
% sharing(sharing only the??)} overflow energy with one another.
%Thus, $\mathcal{P}$ can be interpreted as the set of efficient
%configurations that are also mutually beneficial relative to the
%configuration $(0,0).$

Lemma~\ref{lem:grad_LLR} shows that there are no Pareto-optimal
configurations in $\mathcal{X}^o.$ The following theorem shows that
the Pareto frontier is in fact $\X \setminus \X^o.$ Moreover, the
Pareto frontier contains mutually beneficial configurations.

\begin{theorem}
\label{thm:Pareto_optimal_sharing}
The Pareto frontier $\mathcal{P} = \X \setminus \X^o.$ Moreover, there
exists points $(c_1,c_2) \in \mathcal{P}$ satisfying
$\LLR_i(c_1,c_2) < \LLR^{sa}_i$ for all $i \in \{1,2\}.$
\end{theorem}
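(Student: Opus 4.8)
The plan is to prove the two assertions in turn: first the set identity $\mathcal{P} = \X \setminus \X^o$, via two inclusions, and then the existence of a mutually beneficial point of $\mathcal{P}$. The inclusion $\mathcal{P} \subseteq \X \setminus \X^o$ is immediate from Lemma~\ref{lem:grad_LLR}: at any $(c_1,c_2) \in \X^o$ there is a direction $(1,\theta)$ with $\theta > 0$ along which both $\LLR_i$ strictly decrease. Since $(c_1,c_2)\in\X^o$ means $c_1 < c_{1,\max}$ and $c_2 < c_{2,\max}$, a sufficiently small step in this direction (which increases both coordinates) stays inside $\X$ and produces a configuration that strictly dominates $(c_1,c_2)$ for both agents; hence no point of $\X^o$ can be Pareto-optimal.

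The reverse inclusion $\X \setminus \X^o \subseteq \mathcal{P}$ is the crux, and I would argue it purely from the monotonicity properties of Theorem~\ref{thm:monotonocity}, exploiting the symmetry between the agents to reduce to a point $p = (c_{1,\max}, c_2^0)$ on the right edge. Suppose for contradiction that some $\hat{p} = (\hat{c}_1,\hat{c}_2) \in \X$ dominates $p$. Feasibility gives $\hat{c}_1 \le c_{1,\max}$. Using that $\LLR_2$ is strictly decreasing in $c_1$ and strictly increasing in $c_2$, I would chain $\LLR_2(c_{1,\max},\hat{c}_2) \le \LLR_2(\hat{c}_1,\hat{c}_2) \le \LLR_2(p)$ to conclude $\hat{c}_2 \le c_2^0$. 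Thus $\hat{p}$ is obtained from $p$ by weakly decreasing \emph{both} coordinates. Now invoking the third (and least obvious) part of Theorem~\ref{thm:monotonocity}, namely that $\LLR_1 + \LLR_2$ is strictly decreasing in each coordinate, the sum can only weakly increase, i.e.\ $(\LLR_1 + \LLR_2)(\hat{p}) \ge (\LLR_1 + \LLR_2)(p)$, with equality forcing $\hat{p} = p$. But domination gives the reverse inequality $(\LLR_1 + \LLR_2)(\hat{p}) \le (\LLR_1 + \LLR_2)(p)$, so $\hat{p} = p$, contradicting that the domination is strict in at least one coordinate. Hence $p \in \mathcal{P}$, and by symmetry the top edge is handled identically.

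For the \emph{moreover} claim I would use a compactness argument that avoids any explicit computation. Recall that $\LLR_i(0,0) < \LLR_i^{sa}$, so $(0,0)$ is already mutually beneficial. Let $D = \{(c_1,c_2)\in\X : \LLR_i(c_1,c_2) \le \LLR_i(0,0),\ i=1,2\}$; by the Lipschitz continuity of $\LLR_i$ (Lemma~\ref{lemma:Lipschitz}) this set is closed, hence compact, and it is nonempty since it contains $(0,0)$. Minimizing the continuous function $\LLR_1 + \LLR_2$ over $D$ yields a minimizer $c^*$. Then $c^* \in \mathcal{P}$: any dominator $\hat{c}$ of $c^*$ would satisfy $\LLR_i(\hat{c}) \le \LLR_i(c^*) \le \LLR_i(0,0)$, placing $\hat{c} \in D$ while strictly reducing the sum, which contradicts minimality. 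Finally $c^*$ is mutually beneficial, as $\LLR_i(c^*) \le \LLR_i(0,0) < \LLR_i^{sa}$ for both $i$, which establishes the claim.

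I expect the reverse inclusion to be the main obstacle. It is the only step that genuinely uses the non-trivial sum-monotonicity, and the argument is sensitive to the directions of the four separate monotonicities, which must be combined in exactly the right order: one first pins down $\hat{c}_2 \le c_2^0$ from the agent-$2$ monotonicities, and only then can the sum inequality be deployed to force equality. The differentiability gap underlying Lemma~\ref{lem:grad_LLR} (its gradients exist only almost everywhere) is a secondary concern that I would regard as already absorbed into that lemma's statement.
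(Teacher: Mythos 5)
Your proposal is correct, but it departs from the paper's proof at exactly the two places you flagged, and the comparison is instructive. For the inclusion $\X \setminus \X^o \subseteq \mathcal{P}$, the paper splits potential dominators by location: boundary-versus-boundary domination is dismissed by citing Theorem~\ref{thm:monotonocity} without detail (the cross-edge case, one point on the right edge and one on the top edge, is not spelled out), and an interior dominator $(c_1',c_2') \in \X^o$ is escalated via Lemma~\ref{lem:grad_LLR} to a point of $\X \setminus \X^o$ that ``further dominates'' it --- a step that is incomplete as written, since the lemma only supplies a local improvement direction (and only where the gradient exists), so actually reaching the boundary would require an additional iteration or compactness argument. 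Your chain --- first pinning $\hat{c}_2 \le c_2^0$ from the two monotonicities of $\LLR_2$, then forcing $\hat{p} = p$ from the strict monotonicity of $\LLR_1 + \LLR_2$ in each coordinate --- treats interior and boundary dominators uniformly, uses only Theorem~\ref{thm:monotonocity}, and sidesteps both the a.e.-differentiability issue and the unproven escalation step; it is, if anything, more rigorous than the paper here. For the ``moreover'' claim, the paper instead maximizes the Nash-product-like objective $[\LLR_1(0,0)-\LLR_1(c_1,c_2)]_+\,[\LLR_2(0,0)-\LLR_2(c_1,c_2)]_+$ over $\X$ and invokes Lemma~\ref{lem:grad_LLR} at $(0,0)$ to conclude the optimal value is strictly positive, which buys the strictly stronger conclusion $\LLR_i(c_1^*,c_2^*) < \LLR_i(0,0) < \LLR_i^{sa}$ for both agents; your minimization of $\LLR_1 + \LLR_2$ over the dominance set $D$ needs only continuity (Lemma~\ref{lemma:Lipschitz}), compactness, and the stated fact $\LLR_i(0,0) < \LLR_i^{sa}$, at the cost of concluding only $\LLR_i(c^*) \le \LLR_i(0,0)$ --- still strictly below standalone, which is all the theorem asserts. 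Your forward inclusion $\mathcal{P} \subseteq \X \setminus \X^o$ coincides with the paper's use of Lemma~\ref{lem:grad_LLR}, with the same (acknowledged) caveat about gradients existing only almost everywhere.
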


Theorem~\ref{thm:Pareto_optimal_sharing} shows that all efficient
sharing configurations involve at least one agent transferring energy
to the other at the maximum possible rate when the latter faces loss
of load. Moreover, there exist efficient configurations that are
mutually beneficial relative to standalone operation. Intuitively, if
Agent~$i$ is considerably more deficit prone than Agent~$-i,$ then
efficient, mutually beneficial would involve $c_i = c_{i,\max}.$ This
way, Agent~$i$ transfers energy to Agent~$-i$ at the maximum possible
rate at those relatively rare instances when the latter faces loss of
load load. In return, Agent~$-i$ agrees to a modest rate of energy
transfer to Agent~$i$ at those (relatively more often) times when it
faces deficit.

%%%%%%%%%%%%%%%%%%%%%%%%%%%%%%%%%%%%%%%%%%%%%%%%%%%%%%%%%%
%%%%%%%%%%%%%%%%%%%%%%%%%%%%%%%%%%%%%%%%%%%%%%%%%%%%%%%%%%
%%%%%%%%%%%%%%%%%%%%%%%%%%%%%%%%%%%%%%%%%%%%%%%%%%%%%%%%%%
% FIGURE FROM NEXT SECTION, PLACED HERE FOR BETTER TYPESETTING

\begin{figure*}[t]
	\begin{minipage}{.33\textwidth}
		\includegraphics[width=\linewidth]{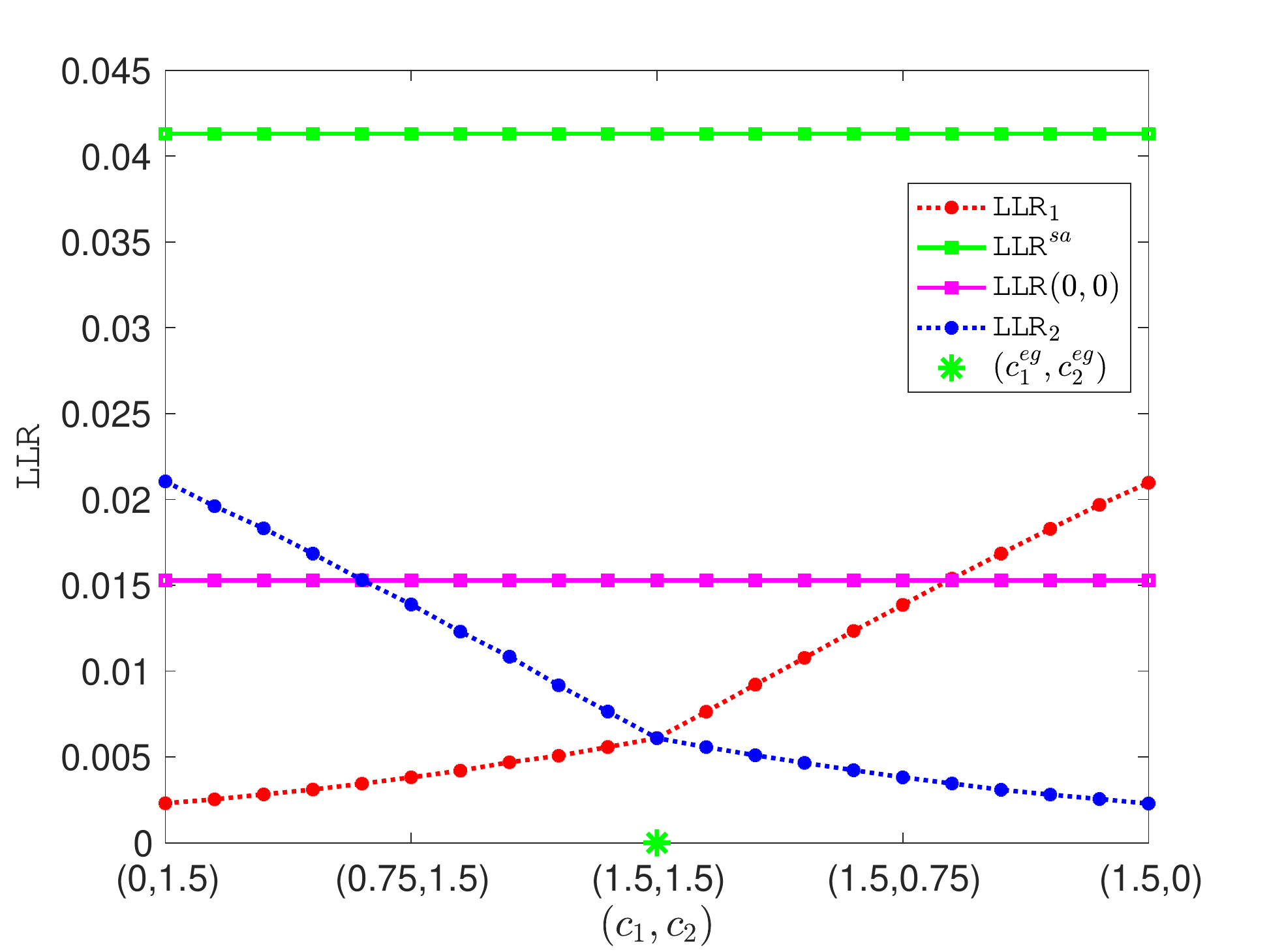}
		\subcaption{$\boldsymbol{r_1(t),r_2(t)\in\{-1.5,2\}}$}
	\end{minipage}
	\begin{minipage}{.33\textwidth}
		\includegraphics[width=\linewidth]{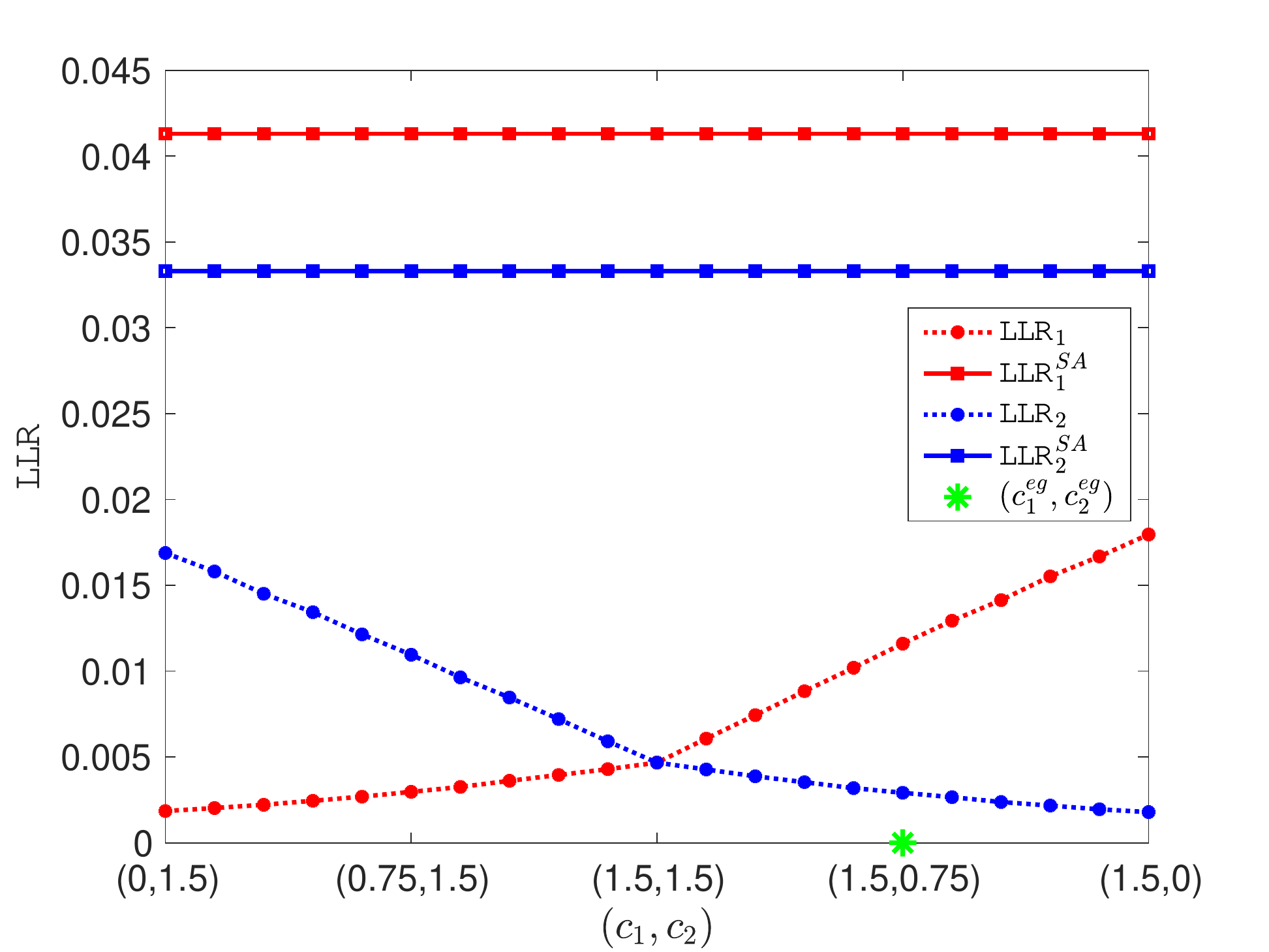}
		\subcaption{$\boldsymbol{r_1(t)\in\{-1.5,2\}}$ and $\boldsymbol{r_2(t)\in\{-1.5,2.15\}}$}
	\end{minipage}
	\begin{minipage}{.33\textwidth}
		\includegraphics[width=\linewidth]{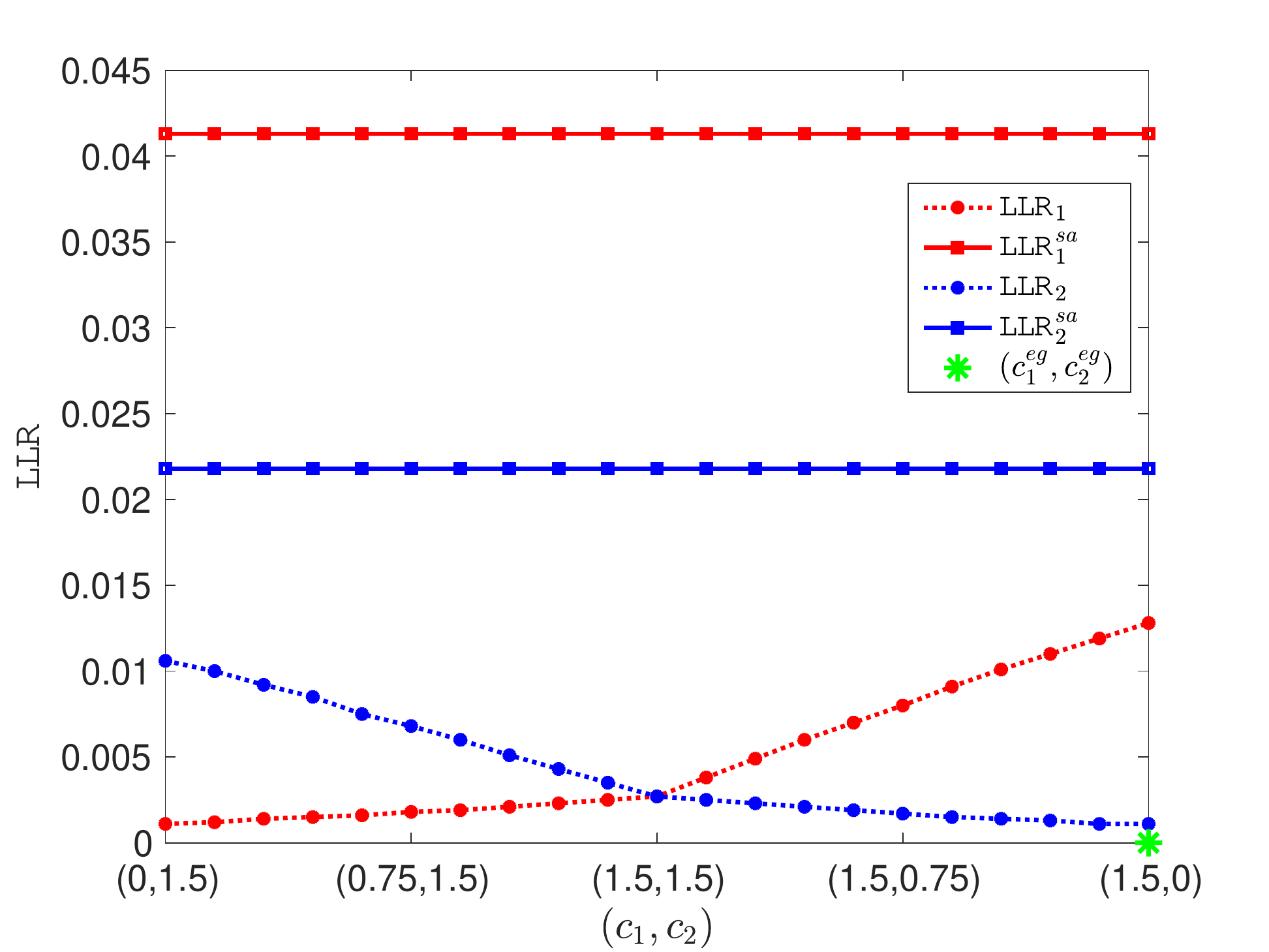}
		\subcaption{$\boldsymbol{r_1(t)\in\{-1.5,2\}}$ and $\boldsymbol{r_2(t)\in\{-1.5,2.5\}}$}
	\end{minipage}
	\caption{Toy example all case: $\boldsymbol{\LLR}$
		variation with sharing configuration
		$\boldsymbol{(c_1,c_2) \in \mathcal{X} \setminus
			\mathcal{X}^o}$ when the net generation processes of two
		agents are symmetric(left) and asymmetric (middle, right). $\boldsymbol{B_1=B_2=10}$ }
	\label{fig:toy_example}
\end{figure*}

%%%%%%%%%%%%%%%%%%%%%%%%%%%%%%%%%%%%%%%%%%%%%%%%%%%%%%%%%%
%%%%%%%%%%%%%%%%%%%%%%%%%%%%%%%%%%%%%%%%%%%%%%%%%%%%%%%%%%
%%%%%%%%%%%%%%%%%%%%%%%%%%%%%%%%%%%%%%%%%%%%%%%%%%%%%%%%%%

\begin{proof}[Proof of Theorem~\ref{thm:Pareto_optimal_sharing}]
  That the Pareto frontier consists of all points in
  $\X \setminus \X^o$ follows by noting that these configurations
  cannot be dominated by any other configuration in $\X.$ To see this,
  suppose that $(c_1,c_2) \in \X \setminus \X^o.$ From
  Theorem~\ref{thm:monotonocity}, it follows that no other
  configuration in $\X \setminus \X^o$ dominates $(c_1,c_2).$ For the
  purpose of obtaining a contradiction, suppose that there exists
  $(c_1',c_2') \in \X^o$ that dominates $(c_1,c_2).$ But
  Lemma~\ref{lem:grad_LLR} shows that there exists
  $(c_1'',c_2'') \in \X \setminus \X^o$ that further dominates
  $(c_1',c_2'),$ and thus also dominates $(c_1,c_2),$ yielding a
  contradiction.

  To argue that there exist mutually beneficial configurations in the
  Pareto frontier, consider the optimization
  $$\max_{(c_1,c_2) \in \X} [\LLR_1(0,0) - \LLR_1(c_1,c_2)]_+ 
  [\LLR_2(0,0) - \LLR_2(c_1,c_2)]_+.$$ Since this is the maximization
  of a continuous function over a compact set, an optimal solution,
  say $(c_1^*,c_2^*),$ exists. Moreover, in light of
  Lemma~\ref{lem:grad_LLR}, the objective value at $(c_1^*,c_2^*)$ is
  strictly positive. This means $(c_1^*,c_2^*)$ is Pareto-optimal, and
  also satisfies $\LLR_i(c_1,c_2) < \LLR_i(0,0) < \LLR_i^{sa}$ for
  both $i \in \{1,2\}.$
\end{proof}

\subsection{Bargaining solutions}

Having proved that there exist configurations on the Pareto frontier
that benefit both agents relative to standalone operation, the next
natural question is to capture the configuration that the (strategic)
agents would agree upon. Of course, the set of mutually beneficial
configurations on the Pareto frontier involves a tradeoff between the
payoffs of the two agents, so the agreement would have to balance the
gains of the two agents. This is precisely the question that the
theory of bargaining addresses \cite{Myerson1991}.

Bargaining theory proposes various solution concepts that seek to
capture the agreement that selfish agents would agree upon, including
the celebrated Nash bargaining solution, the Kalai-Smorodinsky
solution, the egalitarian solution and the utilitarian solution; each
of these solution concepts have an elegant axiomatic justification
\cite{Myerson1991}. For simplicity, we restrict attention to the
egalitarian solution in this paper, which tries to balance the
benefits of the agents as far as possible.
% \footnote{It is also straightforward to compute the other solution
% concepts in our setting.}

Formally, the egalitarian solution $(c_1^{eg},c_2^{eg})$ is defined
by
$$(c_1^{eg},c_2^{eg}) = \argmax_{(c_1,c_2) \in \mathcal{P}}
\min_{i \in \{1,2\}} \{[\LLR_i^{sa} - \LLR_i(c_1,c_2)]_+\}.$$ The
egalitarian solution is \emph{fair}, in that it maximizes the minimum
benefit (measured by the reduction in $\LLR$ relative to standalone
operation) of the agents. If feasible, the egalitarian solution would
in fact equalize these benefits. In the following section, we
illustrate the benefits that can be achieved in practice with our
proposed sharing mechanism under the egalitarian solution.

\section{Case study}
\label{sec:case_study}

To validate our theoretical findings, in this section, we present a
simulation study using a toy model comprising two independent Markov
net generation processes, and a case study involving a solar generator
and a wind generator using real-world data traces.
% comprising wind and solar generation traces for six years with
% suitably designed demand profile for each one of them.

\subsection{Toy example}

We consider the net generation processes of the two agents as
independent, two-state CTMCs.
% (Note that this is a special case of our model.)
Specifically, the rate matrix corresponding to each agent is taken as
$Q=\begin{bmatrix}
  -1 & 1\\
  1 &-1
\end{bmatrix}$. This means that for each agent, state transitions
occur after exponentially distributed intervals of unit mean
length. We consider three cases for the net generation values:
\begin{itemize}
\item Symmetric case: $r_1(t),r_2(t)\in\{-1.5,2\}$
\item Asymmetric case 1: $r_1(t)\in\{-1.5,2\}$ and $r_2(t)\in\{-1.5,2.15\}$
\item Asymmetric case 2: $r_1(t)\in\{-1.5,2\}$ and $r_2(t)\in\{-1.5,2.5\}$ 
\end{itemize}
Top to bottom, note that Agent~2 becomes increasingly `generative'
relative to Agent~1 in the above scenarios. The battery sizes are set
as $B_1=B_2=10$. For each sharing configuration $(c_1,c_2)$, the CTMCs
$\{r_1(t)\}$ and $\{r_2(t)\}$ are simulated for a suitably long
duration and the temporal evolution of battery and the incurred loss
of load for each agent is captured. The loss of load rate $(\LLR_i)$
for Agent~$i$ is computed as the cumulative loss of load over the
simulation horizon, divided by the horizon
length. Figure~\ref{fig:toy_example} depicts the variation of the
$\LLR$ of each agent over the Pareto frontier for all three
settings. For convenience of presentation, we have `flattened' the
Pareto frontier to be the horizontal axis in these figures; i.e., the
independent variable ranges from $(0,c_{2\max})$ to $(c_{1\max},0)$
via $(c_{1\max},c_{2\max}),$ covering entire range of the Pareto
frontier $\mathcal{X} \setminus \mathcal{X}^o.$ The constraint
parameter $c$ is set as 1.5 (and hence it only restricts the
overflow), so that $c_{1,\max} = c_{2,\max} = 1.5.$

The monotonicity properties of $\LLR$ (Theorem \ref{thm:monotonocity})
are evident from Figure~\ref{fig:toy_example}. For the symmetric case
(panel~$(a)$), each agent experiences the same standalone
$\LLR_i^{sa}$ and the same value of ${\LLR}_i{(0,0)}$. For the
asymmetric cases, the standalone loss of load rates of the two agents
$\LLR^{sa}$ are different (see panels~$(b)$ and~$(c)$). As expected,
the egalitarian bargaining solution for the symmetric case is found to
be full sharing, i.e., $(1.5, 1.5),$ resulting in an 85\% reduction in
$\LLR$ for each agent relative to the standalone setting. In
asymmetric case~1, where Agent~2 becomes more generative, the
bargaining solution shifts `right' to $(1.5,0.75),$ i.e., Agent~2
reduces its peak energy transfer rate to Agent~1. In asymmetric
case~2, where Agent~2 becomes even more generative, the egalitarian
solution shifts further `right' to $(1.5,0),$ i.e., Agent~1 shares
energy with Agent~2 at the maximum rate when the latter faces loss of
load, but Agent~2 only shares its overflow energy with
Agent~1. Intuitively, this is because Agent~2 is substantially less
likely to face loss of load in this example, and substantially more
likely to have an energy overflow. Thus, Agent~1 obtains a
considerable benefit from just receiving this overflow energy, which
it reciprocates by sharing energy with Agent~1 at the maximum rate
when the latter faces a deficit.

\subsection{Energy sharing between wind and solar generator}

We collected six years of wind generation data corresponding to a
location in New Mexico, USA. The data is obtained from the Wind
Integration National Dataset (WIND) Toolkit, which has been made
public by National Renewable Energy Laboratory (NREL)
\cite{url_wind_data_windtoolkit}. The data obtained is of five-minute
temporal resolution and ranges from 0-16 megawatt of wind power. For
the same location, we obtained the hourly solar generation data trace
from the software tool SAM, also available on the NREL
website. Assuming the solar power to be constant over each hour, this
hourly data trace of solar generation, which ranges from 0-16 megawatt
of power, is converted into a time series with a finer temporal
resolution of five minutes.

We consider Agent~1 to be a wind generator, and Agent~2 to be a solar
generator, with the above generation traces. To capture demand, 
%Agent~1 operates with wind generation while Agent~2 operates with
%solar generation.
for Agent~1, we fix a constant demand that is 90\% of the time average
wind generation. Since solar generation is only available from around
7 am to 5 pm (total 10 hours), we chose a demand curve which is
non-zero only during this interval i.e., 7 am to 5 pm. The demand
value during this interval is chosen to be 90 percent of the average
solar generation, where the averaging is only performed over the same
interval.
% over these non-zero generation data points (10 hours per day) across
% the six years. It is this demand which the Agent~2 caters with solar
% generation.
Each agent is equipped with a battery capacity 500 kWh.

Over the six years of time series data, the standalone $\LLR^{sa}$ of
Agent~1 (the wind generator) is computed to be $1.8873$ MW, whereas
that of Agent~2 (the solar generator) equals $0.7218$ MW. The
considerably higher standalone $\LLR$ for the wind generator suggests
that the wind generation is much more `variable' than the solar
generation.  The loss of load rate of both agents over the (flattened,
as before) Pareto frontier is plotted in
Figure~\ref{fig:LLR_wind_solar}.  Note that there is a substantial
reduction in $\LLR$ (relative to standalone setting) for the wind generator, but only a modest
reduction for the solar generator (again, consistent with the
considerably higher variability of wind generation). Thus, the
egalitarian solution corresponds to
$(c_1^{eg},c_2^{eg})=(c_{1\max},0),$ i.e., the wind generator shares
energy with the solar generator at the peak rate when the latter faces
a deficit, whereas the solar generator only shares its overflow with
the wind generator. The associated reduction in loss of load rate
(compared to the standalone setting) is 70\% for the wind generator
and 22\% for the solar generator.

% Monotonicity property of $\LLR$ is again evident in
%Figure~\ref{fig:LLR_wind_solar}.
  %Recall from \cref{subsec:existence_of_beneficial_config},that any energy sharing configuration $(c_1,c_2)$ can take values in $\mathcal{X}$ where $\mathcal{X}:=[0,\ c_{1\max}]\times[0,\ c_{2\max}]$. 

\begin{figure}
	\includegraphics[scale=0.35]{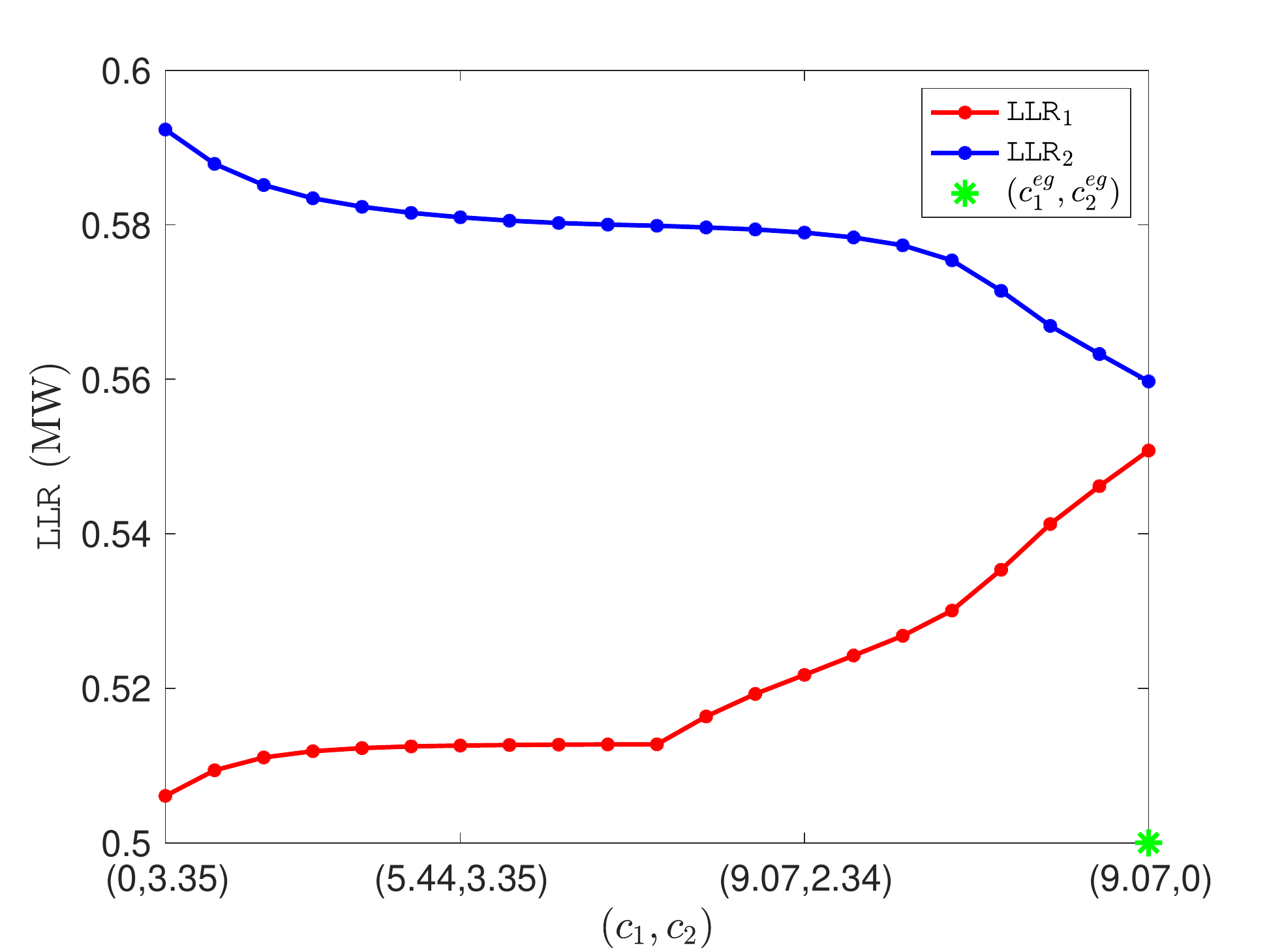}
	\caption{Simulation results with wind and solar data traces. $\boldsymbol{\LLR_1}$ and $\boldsymbol{LLR_2}$ are plotted for various sharing configurations covering the entire Pareto-frontier.  $\boldsymbol{B_1=B_2=500kWh}$}
	\label{fig:LLR_wind_solar}
\end{figure}

\section{Concluding remarks}
\label{sec:discuss}

This work motivates extensions along various directions. A natural
first step would be to generalize the sharing mechanism to multiple
(more than two) agents, and characterize the Pareto frontier of the
achievable reliability vectors. Another useful direction is to
consider other reliability metrics, potentially capturing (i) shorter
timescales, and (ii) a non-linear cost/penalty associated with
increasing loss of load.

Another interesting line of questions pertains to network formation:
Given a collection of renewable generators, which of them should come
together to enter into an energy sharing agreement? Since statistical
diversity lies at the core of effective energy sharing, it is also
important to consider policy interventions that would encourage energy
sharing agreements between geographically separated renewable
generators. Indeed, such generators would have to be allowed to fulfill
a deficit in generation at one bus on the grid with a surplus
injection at another (of course subject to grid stability
considerations).

\section*{Acknowledgment}

This work was supported by grant DST/CERI/MI/SG/2017/077 under the
Mission Innovation program on Smart Grids by the Department of Science
and Technology, India.

\bibliographystyle{IEEEtran}
\bibliography{refs}
\section{Appendix}
\label{appendix}

\begin{proof}[Proof of Lemma \ref{lem:bat_induction}]

  Let $\{t_k\}_{k\in \mathbb{Z}^+}$ denote the transition instances
  when the background process $\{X(t)\}$ changes its state, so that
  the net generation values $r_1(t)$ and/or $r_2(t)$, change at these
  instances.  Therefore, within any time slot $(t_k, t_{k+1}]$, the
  net generation is constant for both agents.

  The proof proceeds via induction. We assume
  Lemma~\ref{lem:bat_induction} holds at time $t_k$. We will verify
  that in each of the possible ways in which the dynamics of the
  original system and the $\sim$ system can evolve,
  Lemma~\ref{lem:bat_induction} holds $\forall t\in (t_k,
  t_{k+1}]$. We define instances
  $e_i, \tilde{e}_i, f_i, \tilde{f}_i \in \big(t_k,\ t_{k+1}\big]$ as
  the instants when
  $b_i(e_i)=0,\ \tilde{b}_i(\tilde{e}_i)=0,\ b_i(f_i)=B_i$ and
  $\tilde{b}_i(\tilde{f}_i)=B_i,$ respectively.
  % The strict inequality
  % between these instances (used in the proof) can also be replaced
  % by
  % equality, wherever possible, and the proof still holds.
  Occasionally, for simplicity, we refer to the battery (not battery
  level) of Agent~$i$ in the original system (respectively, in the
  $\sim$ system) by $b_i$ (respectively, by $\tilde{b}_i$). Let $r_i$
  denote the net generation rate of Agent~$i$ within the slot
  $(t_k, t_{k+1}].$ There are the following four cases to consider.
\begin{caseof}
  \case { $r_1 >0,\ r_2 >0$} {There is no energy sharing in this case,
    so it is easy to see that the ordering of battery occupancies
    between the two systems continues to hold for $t \in $
    $(t_k, t_{k+1}].$}

  \case{ $r_1<0,\ r_2>0$} {Throughout the interval $(t_k \ t_{k+1}]$,
    there is no energy transfer from Agent~1 to Agent~2 in this case,
    in the original system as well as in the $\sim$ system.
    % unless there is an overflow.
    Therefore the extra $\epsilon$ in the sharing configuration of
    Agent~1 in the $\sim$ system is inconsequential and the induction
    step follows as before.}

  \case{$r_1<0,\ r_2<0$} {%\begin{enumerate}
    Due to space constraints, we only consider the case where
    $\tilde{b}_1$ gets fully discharged before $\tilde{b}_2$, i.e.,
    $t_k < \tilde{e}_1 < \tilde{e}_2.$ The complementary case can be
    handled on similar lines. The two possible cases are:
    \begin{enumerate}
    \item $b_1$ discharges before $\tilde{b}_2$ which leads to the
      only possible sequence of events as
      $\tilde{e}_1 <e_1 < \tilde{e}_2 < e_2$ (see
      Figure~\ref{fig:Case3_1}).
      \begin{figure}[H]
        \includegraphics[trim={1cm 0 0 0},scale=0.28]{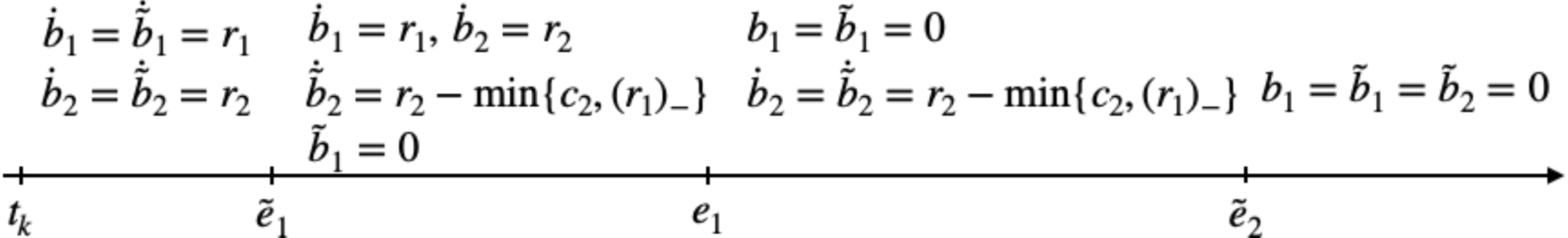}
        \caption{$\boldsymbol{r_1<0,\ r_2<0,\ b_1}$ discharges before $\boldsymbol{\tilde{b}_2}$}	
        \label{fig:Case3_1}
      \end{figure}
      It is clear from Figure~\ref{fig:Case3_1} that the  induction argument in Lemma~\ref{lem:bat_induction} holds $\forall\ t\in [t_k,\ t_{k+1}].$
      
    \item $\tilde{b}_2$ discharges before $b_1$ which leads to the possible sequence of events as shown in Figure~\ref{fig:Case3_2}.
      \begin{figure}[H]
        \includegraphics[trim={0 0 0 0},scale=0.3]{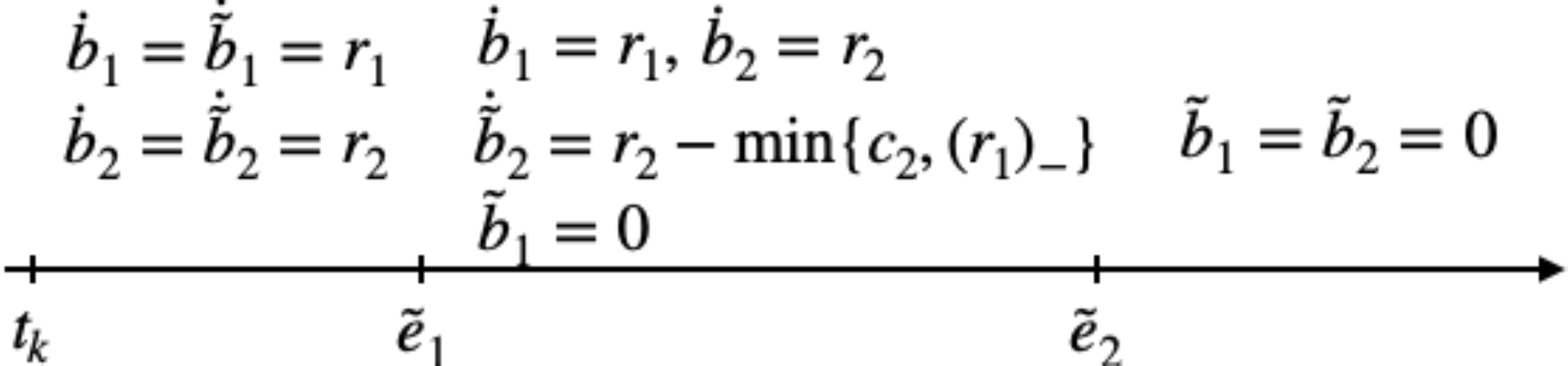}	
        \caption{$\boldsymbol{r_1<0,\ r_2<0,\ \tilde{b}_2}$ discharges before $\boldsymbol{b_1}$}	
        \label{fig:Case3_2}
      \end{figure}
      
      Again, it is clear from Figure~\ref{fig:Case3_2} that
      Lemma~\ref{lem:bat_induction} holds
      $\forall\ t\in [t_k,\ t_{k+1}].$
    \end{enumerate}
  }
  
%		\item $\tilde{b}_2$ gets fully discharged first before $\tilde{b}_1$ i.e., $t_k < \tilde{e}_2 < \tilde{e}_1$ \\
%				 $\tilde{b}_1$ will share with rate $c_1+\epsilon$. 
%				\begin{enumerate}
%					\item $\tilde{b}_1$ drains to zeros before $b_2$  i.e., $t_k < \tilde{e}_2 < \tilde{e}_1 $
%						\begin{figure}[H]
%							\includegraphics[trim={0 0 0 0},scale=0.4]{images/case3_2a.png}	
%							\caption{1}
%						\end{figure}						
%						Clearly, \cref{lem:bat_induction} holds $ \forall\ t\in [t_k,\ t_{k+1}]$.
%					\item $b_2$ drains to zeros before $\tilde{b}_1$. Possible sequence of events:
%						\begin{figure}[H]
%							\includegraphics[trim={1cm 0 0 0},scale=0.3]{images/case3_2b.png}	
%							\caption{2}
%						\end{figure}
%						Clearly, \cref{lem:bat_induction} holds $ \forall\ t\in [t_k,\ t_{k+1}]$.		
%%					\item  $\tilde{b}_1$ and $b_2$ drains to zeros simultaneously.
%%						\begin{figure}[H]
%%							\includegraphics[trim={1cm 0 0 0},scale=0.4]{case3_1c.png}	
%%							\caption{3}
%%						\end{figure}
%%						Clearly, \cref{lem:bat_induction} holds $ \forall\ t\in [t_k,\ t_{k+1}]$.
%				\end{enumerate}
	 
%		\item $\tilde{b}_1$  and $\tilde{b}_2$ both gets fully discharged at same time\\
%			\begin{figure}[H]
%					\includegraphics[trim={0 0 0 0},scale=0.45]{case3_3.png}
%					\caption{7}	
%			\end{figure}
%			Clearly, \cref{lem:bat_induction} holds for $\forall  t\in [t_k,\ t_{k+1}]$		
	%\end{enumerate}
	%}
	
  \case{$r_1 >0,\ r_2 <0$} {In this, there are two major cases to be
    considered, each one needs a separate analysis for
    $r_1 \geq |r_2|$ and for $r_1 < |r_2|$.
    
    %\begin{enumerate}
    \noindent {\bf Subcase 1:} $b_1$ becomes full before $\tilde{b}_2$
    drains to zero ($t_k < f_1 < \tilde{e}_2$).
      %\begin{enumerate}
        
      \noindent {\bf 1a:} $r_1 \geq |r_2|$
        \begin{figure}[H]
          \includegraphics[trim={0 0 0 0},scale=0.4]{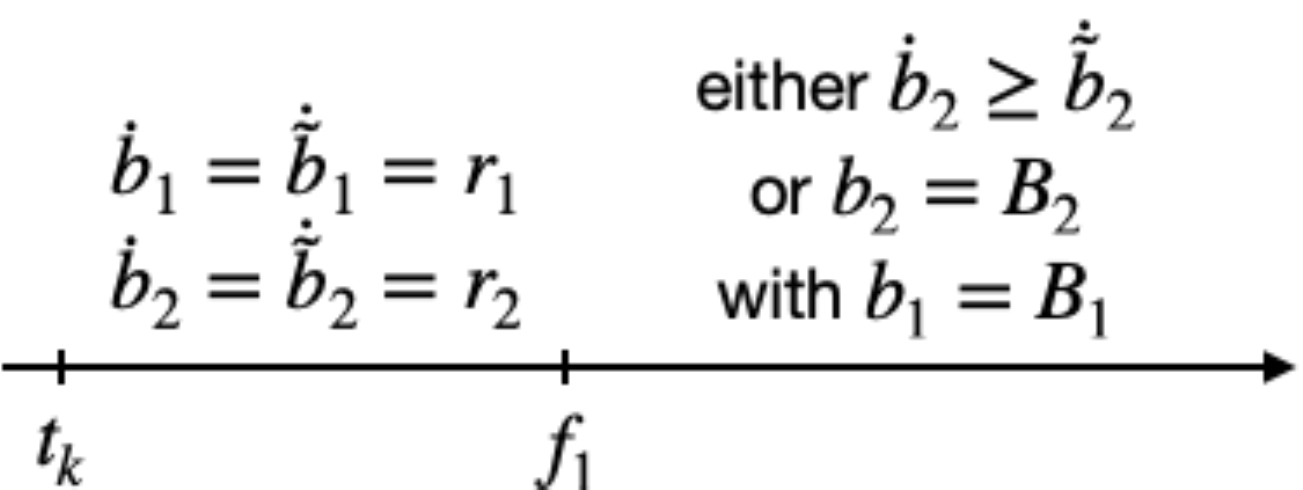}
          \caption{$\boldsymbol{r_2<0<r_1,\ r_1\geq |r_2|}$}
        \end{figure}
        From time $f_1$ onwards, $b_1$ gives energy to $b_2$ with rate
        $r_1$ due to its overflow. Since the overflow is greater than
        or equal to the absolute value of the discharging rate of
        $b_2$, $b_2$ will either charge or remain at the same level
        whereas $\tilde{b}_1$ will continue to charge and won't
        discharge even after $\tilde{b}_2$ discharges to zero. Once
        $\tilde{b}_1$ is full, rate of change of battery levels in
        $b_2$ and $\tilde{b}_2$ is the same (if $b_2$ is not fully
        charged). Therefore from $f_1$ onwards, either $b_2=B_2$ or
        $\dot{b}_2=\dot{\tilde{b}}_2$ which results in
        $b_2 \geq \tilde{b}_2\ \forall t\in [f_1,\ t_{k+1}]$. Also
        note that from $f_1$ onwards, $b_1=B_1$. Therefore together we
        conclude that \cref{lem:bat_induction} holds for
        $\forall t\in [t_k,\ t_{k+1}]$.
 			
      \noindent {\bf 1b:} $r_1 < |r_2|$

      In this there are three further cases to consider:
      % \begin{enumerate}
      
      \noindent {\bf 1b-A:} When full, both $b_1$ and $\tilde{b}_1$
      allows discharge i.e., $r_1<\min(c_1,-r_2)$ and hence
      $r_1<\min(c_1+\epsilon,-r_2).$ Possible event sequences are:
        \begin{enumerate}
        \item $f_1 < \tilde{e}_2 < \tilde{e}_1 	 < e_2			  < e_1$
        \item $f_1 < \tilde{e}_2 < e_2 			  < \tilde{e}_1   < e_1$
        \item $f_1 < \tilde{f}_1  < \tilde{e}_2  < e_2 			  < \tilde{e}_1   < e_1$
        \item $f_1 < \tilde{f}_1  < \tilde{e}_2  < \tilde{e}_1   < e_2            < e_1$
        \end{enumerate}
        It is not hard to verify that, in each case, the battery
        induction argument \eqref{eq:bat_induction} holds along the
        sequence of events. Sequence (4) is depicted in the figure
        below. Other sequences can easily be verified analogously. 
        \begin{figure}[H]
          \includegraphics[trim={0 0 0 0},scale=0.34]{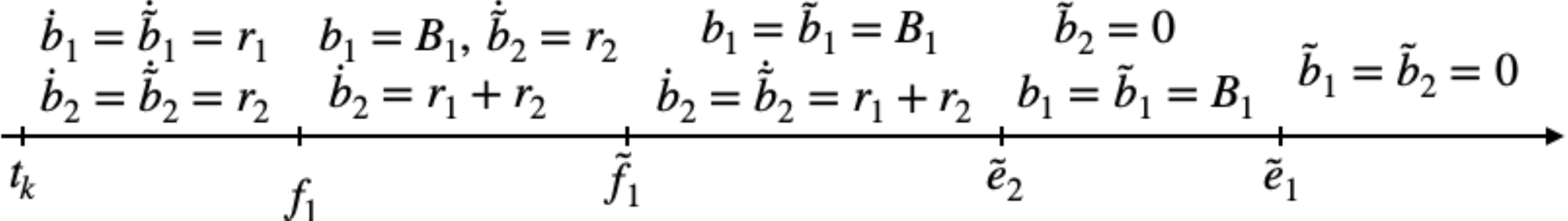}
          \caption{$\boldsymbol{r_2<0<r_1,\ r_1 < |r_2|}$; when full, both $\boldsymbol{b_1}$ and $\boldsymbol{\tilde{b}_1}$ allows discharge}
        \end{figure}
        \noindent {\bf 1b-B:} When full, neither $b_1$ nor
        $\tilde{b}_1$ allows the discharge i.e.
        $r_1 \geq \min(c_1,-r_2)$ and
        $r_1 \geq \min(c_1+\epsilon,-r_2).$ Due to $r_1<|r_2|$,
        possible event sequences are:
        \begin{enumerate}
        \item $f_1 < \tilde{e}_2 < \tilde{f}_1 < e_2$
        \item $f_1 < \tilde{e}_2 < e_2 < \tilde{f}_1$
        \item  $f_1 < \tilde{f}_1  < \tilde{e}_2 < e_2$
        \end{enumerate}
        Sequence (3) is depicted in the figure below. Other sequences can be
        verified similarly.
        \begin{figure}[H]
          \includegraphics[trim={0 0 0 0},scale=0.4]{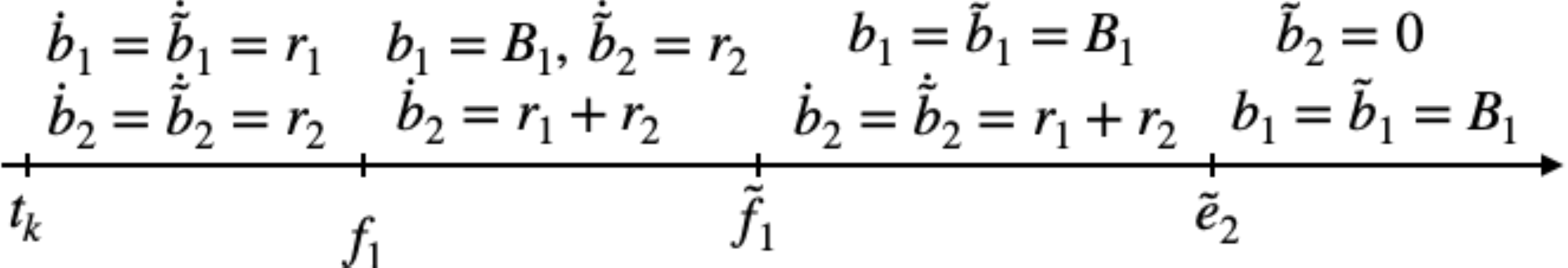}
          \caption{$\boldsymbol{r_2<0<r_1,\ r_1 < |r_2|}$; when full, neither $\boldsymbol{b_1}$ nor $\boldsymbol{\tilde{b}_1}$ allows discharge}
        \end{figure}
        \noindent {\bf 1b-C:} When full, $b_1$ does not allow the
        discharge but $\tilde{b}_1$ does i.e.
        $r_1 \geq \min(c_1,-r_2)$ and $r_1 < \min(c_1+\epsilon,-r_2).$
        Possible event sequences are:
        \begin{enumerate}
        \item $f_1 < \tilde{e}_2 < \tilde{e}_1 < e_2$
        \item $f_1 < \tilde{e}_2 < e_2 < \tilde{e}_1$
          % \item  $f_1 < \tilde{f}_1  < \tilde{e}_2 < e_2 < \tilde{e}_1$
          % \item  $f_1 < \tilde{f}_1  < \tilde{e}_2 < \tilde{e}_1< e_2$
        \end{enumerate}
        In these cases, note that from $f_1$ onwards, $b_1$ shares
        with $b_2$ with rate $r_1(>0)$ due to overflow. Since
        $r_1<|r_2|$, it is easy to see that either both batteries,
        $b_2$ and $\tilde{b}_2$, discharges ($b_2$ discharges slowly
        due to the overflow it receives) or eventually at one point
        $\tilde{b}_2$ hits the zeros level and remains zero after
        that. This ensures that
        $b_2(t) \geq \tilde{b}_2(t)\ \forall t\in [f_1, t_{k+1}]$. The
        fact that $b_1$ remains full from $f_1$ onwards, ensures that
        $b_1(t) \geq \tilde{b}_1(t)\ \forall t\in [f_1,
        t_{k+1}]$. This is depicted below.
        \begin{figure}[H]
          \includegraphics[trim={0 0 0 0},scale=0.45]{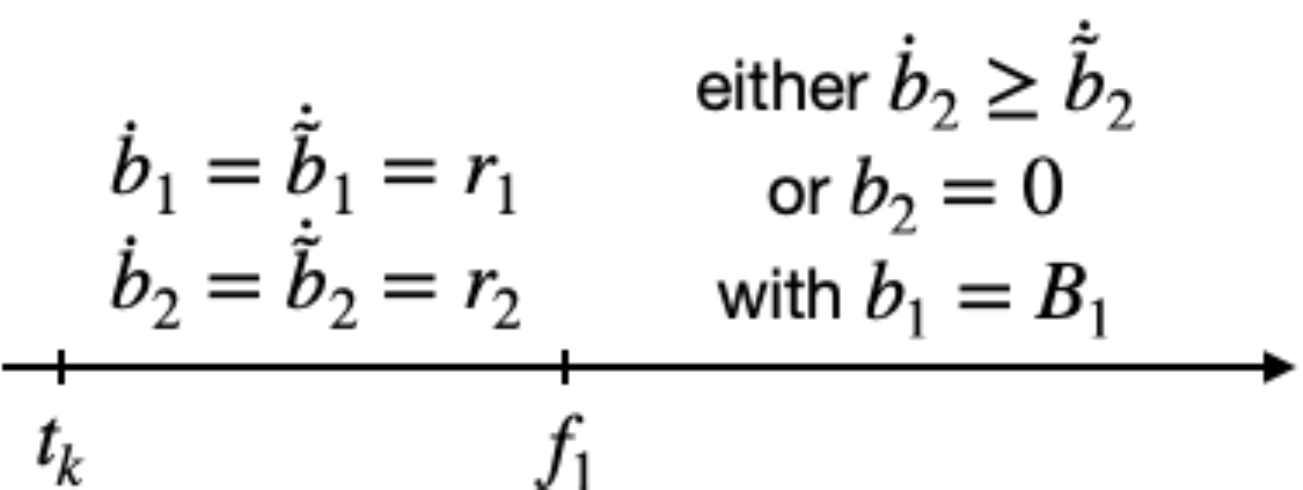}
          \caption{$\boldsymbol{r_2<0<r_1,\ r_1 < |r_2|}$; when full, $\boldsymbol{b_1}$ doesn't allow discharge but $\boldsymbol{\tilde{b}_1}$ does}
        \end{figure}
        % \item $\tilde{b}_1$ becomes full before $\tilde{b}_2$ can drains to zero.\\
        %   Only possible sequence of events: $f_1 < \tilde{f}_1 < \tilde{e}_2 < e_2$
        % 							% \begin{figure}[H]
        %							% \includegraphics[trim={5cm 12cm 5cm 1cm},scale=0.34]{case4_1b_i.png}
        %							%	\caption{11}	
        %							% \end{figure}
        % \item $\tilde{b}_2$ drains to zero before  $\tilde{b}_1$ can  becomes full\\
        %   Two possibilities: 
        %   \begin{enumerate}
        %   \item $f_1 < \tilde{e}_2 < \tilde{f}_1 < e_2$
        %   \item $f_1 < \tilde{e}_2 < e_2 < \tilde{f}_1 $
        %   \end{enumerate}
        % \item $\tilde{b}_1$ charges to $B_1$ and $\tilde{b}_2$ discharges to zero at same time instance.\\
        %   Only possible sequence of events: $f_1 < \tilde{f}_1 < \tilde{e}_2 < e_2$
        % \end{enumerate}
      
    %\end{enumerate}

      \noindent {\bf Subcase 2:} $\tilde{b}_2$ drains to zero before
      $b_1$ becomes full ($t_k < \tilde{e}_2 < f_1$)
      %\begin{enumerate}

      \noindent {\bf 2a:} $r_1\geq |r_2|$\\
      From $\tilde{e}_2$ onwards, $\tilde{b}_1$ shares energy to
      $\tilde{b}_2$ with rate constrained to
      $\min\{c_1+\epsilon, (r_2)_{-}\}$. The fact that
      $r_1 \geq |r_2|$ implies that $\tilde{b}_1$ will continues to
      charge even after sharing.
        \begin{figure}[H]
          \includegraphics[trim={0 0 0 0},scale=0.45]{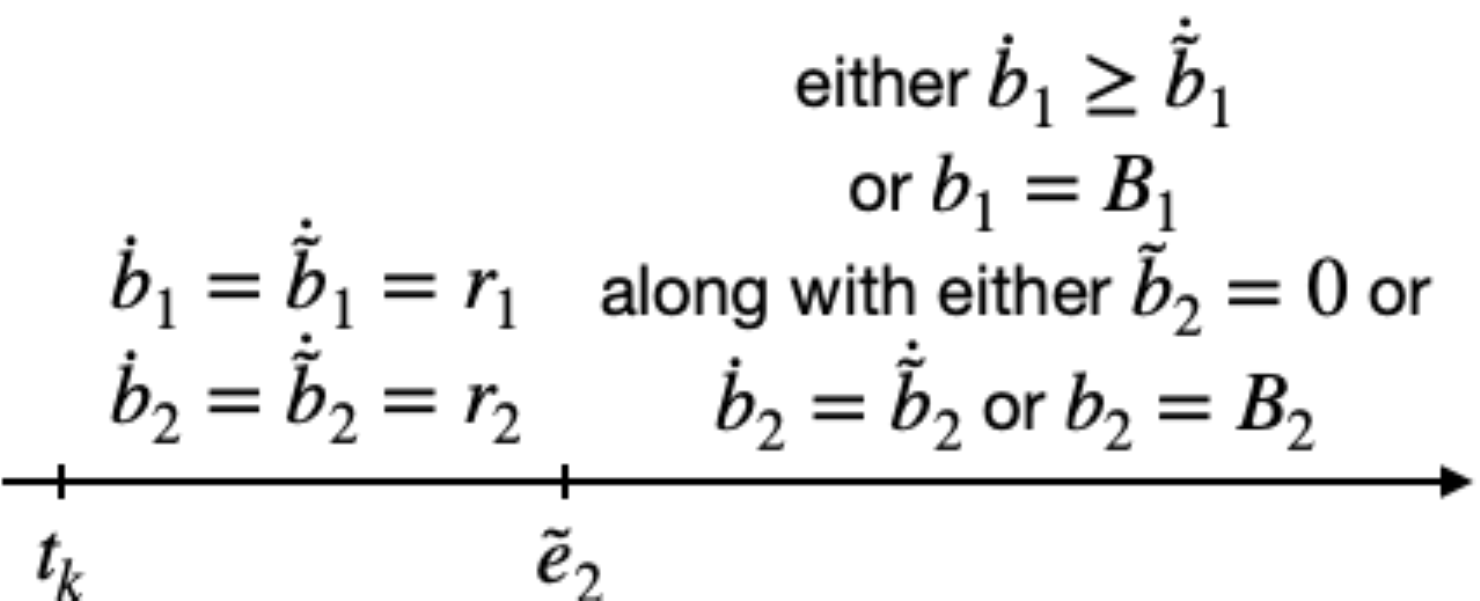}
          \caption{$\boldsymbol{r_2<0<r_1,\ r_1 \geq |r_2|}$}
        \end{figure}
        \noindent {\bf 2b:} $r_1<|r_2|$\\
        Since $r_1<|r_2|,$ $\tilde{b}_2$ remains empty from
        $\tilde{e}_2$ onwards, and hence
        $\tilde{b}(t)\leq b_2(t)\ \forall t\in[\tilde{e}_2,\ t_{k+1}]$
        which proves Lemma~\ref{lem:bat_induction} for Agent~2 side
        which comprises $b_2$ and $\tilde{b}_2$ (see
        Figure~\ref{fig:Case 4(2b)}).
        \begin{figure}[H]
          \includegraphics[trim={0 0 0 0},scale=0.45]{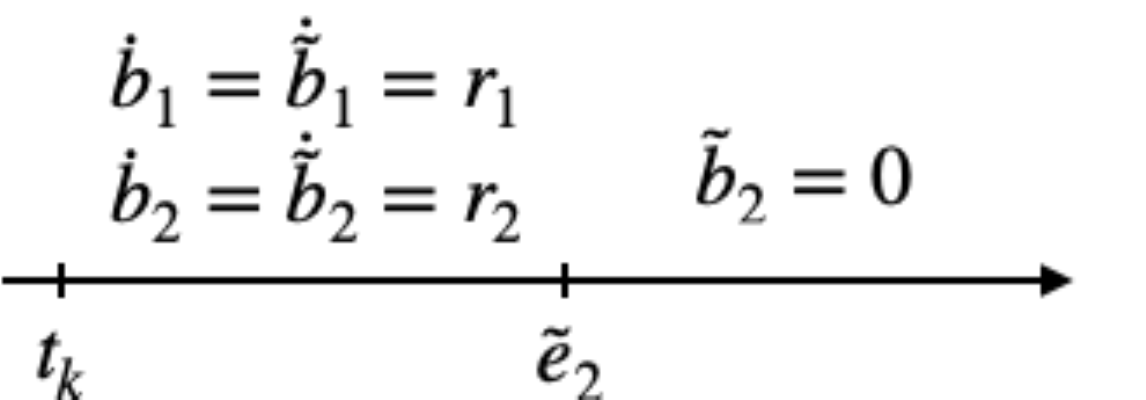}
          \caption{$\boldsymbol{r_2<0<r_1,\ r_1 < |r_2|}$}
          \label{fig:Case 4(2b)}
        \end{figure}
        
        On Agent~1 side, $\tilde{b}_1$ shares to $\tilde{b}_2$ with a
        maximum sharing rate $\min (c_1+\epsilon,(r_2)_{-})$.
        % 
        % Note that this sharing constraints may results in a case
        % when $\tilde{b}_1$ continues to charge after sharing and
        % becomes full, thereafter $\tilde{b}_2$ will not get charged
        % even when it receives the overflow from $\tilde{b}_1$. This
        % is due to the fact that $r_1 < |r_2|$. Thus from
        % $\tilde{e}_2$ onwards
        % $ \tilde{b}_2(t)(=0) \leq b_2(t) \ \forall t\in
        % [\tilde{e}_2,\ t_{k+1}]$. Thus Lemma \ref{lem:bat_induction}
        % is proved for Agent~2 side ($b_2, \tilde{b}_2$).  At Agent 1
        % side (i.e. $b_1, \tilde{b}_1$), since $\tilde{b}_1$ starts
        % sharing immediately after $\tilde{e}_2$, it will lack behind
        % $b_1$ till they both ($b_1$ and $\tilde{b}_1$) becomes full
        % (if possible, since it is possible that $b_1$ starts
        % discharging after it becomes full).
        Here, as before, there are three cases to consider.
        %\begin{enumerate}

        \noindent {\bf 2b-A:} When full, both $b_1$ and $\tilde{b}_1$
        allows discharge i.e. $r_1<\min(c_1,-r_2)$ and hence
        $r_1<\min(c_1+\epsilon,-r_2).$ Possible event sequences are:
          \begin{enumerate}
          \item $\tilde{e}_2<f_1<e_2<\tilde{e}_1<e_1$
          \item $\tilde{e}_2<f_1<\tilde{e}_1<e_2<e_1$
          \item $\tilde{e}_2<e_2<\tilde{e}_1<e_1$ 
          \item $\tilde{e}_2<\tilde{e}_1<e_2<e_1$
          \end{enumerate}
          \noindent {\bf 2b-B:} When full, neither $b_1$ nor
          $\tilde{b}_1$ allows the discharge i.e.
          $r_1 \geq \min(c_1,-r_2)$ and
          $r_1 \geq \min(c_1+\epsilon,-r_2).$ Possible event sequences
          are:
          \begin{enumerate}
          \item $\tilde{e}_2<f_1<e_2<\tilde{f}_1$
          \item $\tilde{e}_2<f_1<\tilde{f}_1<e_2$
            % \item $\tilde{e}_2<f_1<e_2<\tilde{f}_1$
          \item $\tilde{e}_2<e_2<f_1<\tilde{f}_1$ 
            % \item $\tilde{e}_2<\tilde{e}_1<e_2<e_1$
          \end{enumerate}
          
          \noindent {\bf 2b-C:} When full, $b_1$ don't allow the
          discharge but $\tilde{b}_1$ does i.e.
          $r_1 \geq \min(c_1,-r_2)$ and
          $r_1 < \min(c_1+\epsilon,-r_2).$ Possible event sequences
          are:
          \begin{enumerate}
            % \ignore
            % { 
            % \item $\tilde{e}_2<f_1<e_2<\tilde{f}_1$
            % \item $\tilde{e}_2<f_1<\tilde{f}_1<e_2$
            % \item $\tilde{e}_2<f_1<e_2<\tilde{f}_1$
            % \item $\tilde{e}_2<e_2<f_1<\tilde{f}_1$ 
            % \item New
            % }
          \item $\tilde{e}_2<f_1<\tilde{e}_1<e_2$
          \item $\tilde{e}_2<f_1<e_2<\tilde{e}_1$
          \item $\tilde{e}_2<\tilde{e}_1<f_1<e_2$
          \item $\tilde{e}_2<\tilde{e}_1<e_2<f_1$
          \item $\tilde{e}_2<e_2<f_1<\tilde{e}_1$
          \item $\tilde{e}_2<e_2<\tilde{e}_1<f_1$
          \end{enumerate}
          % \end{enumerate} 	
        
      %\end{enumerate}
          With careful observation along each of these sequences, it can be easily
          verified that
          $\tilde{b}_1(t) \leq b_1(t),\ \tilde{b}_2(t)\leq b_2(t)\
          \forall t \in [\tilde{e}_2\ t_{k+1}].$

 	%\end{enumerate}
 	}
\end{caseof}
 %The strict inequality between these instances (used in the proof) can also be replaced
 %by equality, wherever possible, and the proof still holds.
\end{proof}

\end{document}